\newenvironment{protocol}[1][htb]
  {
   \begin{algorithm}[#1]%
  }{\end{algorithm}}
\newcommand{\ket}[1]{|#1\rangle}
\newcommand{\bra}[1]{\langle#1|}
\newcommand{\ketbra}[2]{|#1\rangle\langle#2|}
\newcommand{\braket}[2]{\langle#1|#2\rangle}
\newcommand{\norm}[1]{\left\lVert#1\right\rVert}
\newcommand{\ceil}[1]{\lceil{#1}\rceil}
\newtheorem{theorem}{Theorem}[section]
\newtheorem{corollary}[theorem]{Corollary}
\newtheorem{lemma}[theorem]{Lemma}
\newtheorem{claim}[theorem]{Claim}
\newtheorem{definition}[theorem]{Definition}
\newcommand{\C}{\mathbb{C}}
\newcommand{\eps}{\varepsilon}
\renewcommand{\epsilon}{\varepsilon}
\newcommand{\cbra}[1]{\left\{#1\right\}}
\newcommand{\rbra}[1]{\left(#1\right)}
\newcommand{\mathify}[1]{\ifmmode{#1}\else\mbox{$#1$}\fi}
\newcommand{\zone}{\{0, 1\}}
\DeclarePairedDelimiter\abs{\lvert}{\rvert}%
\let\oldabs\abs
\def\abs{\@ifstar{\oldabs}{\oldabs*}}
\newcommand{\Qpureoneway}[1]{\mathsf{Q}^{\mathrm{pure}, \rightarrow}_{#1}}
\newcommand{\Q}[1]{\mathsf{Q}^{\mathsf{pri}}_{#1}}
\newcommand{\rank}{\mathsf{rk}}
\newcommand{\rankeps}{\mathsf{rk}_{\eps}}
\newcommand{\ranknonneg}[1]{\mathsf{rk}^{+}_{#1}}
\newcommand{\rankpsd}[1]{\mathsf{rk}^{\mathrm{psd}}_{#1}}
\newcommand{\Rpri}[1]{\mathsf{R}^{\mathsf{pri}}_{#1}}
\newcommand{\Rpub}[1]{\mathsf{R}^{\mathsf{pub}}_{#1}}
\newcommand{\EQ}{\mathsf{EQ}}
\newcommand{\SINK}{\mathsf{SINK}}
\newcommand{\XOR}{\mathsf{XOR}}
\newcommand{\inp}[2]{\langle{#1},{#2}\rangle} 
\newcommand{\tr}{\mbox{\rm tr}}
\newcommand{\R}{\mathbb{R}}
\newcommand{\E}{\mathbb{E}}
\newcommand{\F}{\mathbb{F}}
\title{Tight Bounds for the Randomized and Quantum\\ Communication Complexities of Equality with Small Error}
\author{Olivier Lalonde\thanks{DIRO, Université de Montréal. Supported by the Natural Sciences and Engineering Research Council of Canada (NSERC). {\tt olivier.lalonde.1@umontreal.ca}} 
\and
Nikhil S.~Mande\thanks{University of Liverpool, UK. Work done while the author was a postdoc at CWI, Amsterdam, and supported by the Dutch Research Council (NWO) through QuantERA ERA-NET Cofund project QuantAlgo (project number 680-91-034). {\tt mande@liverpool.ac.uk}}
\and 
Ronald de Wolf\thanks{QuSoft, CWI and University of Amsterdam, the Netherlands. 
Partially supported by the Dutch Research Council (NWO/OCW), as part of the Quantum Software Consortium programme (project number 024.003.037), and through QuantERA ERA-NET Cofund project QuantAlgo (680-91-034). {\tt rdewolf@cwi.nl}
}
}
\date{}
\begin{document}

\maketitle

\begin{abstract}
\noindent
We investigate the randomized and quantum communication complexities of the well-studied Equality function with small error probability~$\eps$, getting the optimal constant factors in the leading terms in various different models.

The following are our results in the \emph{randomized} model:
\begin{itemize}
    \item We give a general technique to convert public-coin protocols to private-coin protocols by incurring a small multiplicative error at a small additive cost. This is an improvement over Newman's theorem [Inf.~Proc.~Let.'91] in the dependence on the error parameter.
    \item As a consequence we obtain a $(\log(n/\eps^2) + 4)$-cost private-coin communication protocol that computes the $n$-bit Equality function, to error $\eps$. This improves upon the   $\log(n/\eps^3) + O(1)$ upper bound implied by Newman's theorem, and matches the best known lower bound, which follows from Alon [Comb.~Prob.~Comput.'09], up to an additive $\log\log(1/\eps) + O(1)$.
\end{itemize}
The following are our results in various \emph{quantum} models:
\begin{itemize}
    \item We exhibit a one-way protocol with $\log(n/\eps) + 4$ qubits of communication for the $n$-bit Equality function, to error $\eps$, that uses only pure states. This bound was implicitly already shown by Nayak [PhD~thesis'99]. 
    \item We give a near-matching lower bound: any $\eps$-error one-way protocol for $n$-bit Equality that uses only pure states communicates at least $\log(n/\eps) - \log\log(1/\eps) - O(1)$ qubits.
    \item We exhibit a one-way protocol with $\log(\sqrt{n}/\eps) + 3$ qubits of communication that uses \emph{mixed} states. 
    This is tight up to additive $\log\log(1/\eps) + O(1)$, which follows from Alon'09.
    \item We exhibit a one-way entanglement-assisted protocol achieving error probability $\epsilon$ with $\ceil{\log(1/\eps)} + 1$ classical bits of communication and $\ceil{\log(\sqrt{n}/\eps)} + 4$ shared EPR-pairs between Alice and Bob. This matches the communication cost of the classical public coin protocol achieving the same error probability while improving upon the amount of prior entanglement that is needed for this protocol, which is $\ceil{\log(n/\eps)} + O(1)$ shared EPR-pairs.
\end{itemize}
Our upper bounds also yield upper bounds on the approximate rank, approximate nonnegative-rank, and approximate psd-rank of the Identity matrix. As a consequence we also obtain improved upper bounds on these measures for 
a function that was recently used to refute the randomized and quantum versions of the log-rank conjecture (Chattopadhyay, Mande and Sherif~[J.~ACM'20], Sinha and de Wolf~[FOCS'19], Anshu, Boddu and Touchette~[FOCS'19]).
\end{abstract}

\section{Introduction}

Yao~\cite{Yao79} introduced the classical model of communication complexity, and also subsequently introduced its quantum analogue~\cite{Yao93}. 
Communication complexity has important applications in several disciplines, such as lower bounds on circuits, data structures, streaming algorithms, and many other areas (see, for example, \cite{KN97,RY:cc} and the references therein). The basic model of communication complexity involves two parties, usually called Alice and Bob, who wish to jointly compute $F(x, y)$ for a known function $F : \zone^n \times \zone^n \to \zone$, where Alice holds $x \in \zone^n$ and Bob holds $y \in \zone^n$. The parties use a communication protocol agreed upon in advance to compute $F(x, y)$. They are individually computationally unbounded and the cost is the amount of communication between the parties on the worst-case input.

Consider the $n$-bit \emph{Equality} function, denoted $\EQ_n : \zone^n \times \zone^n \to \zone$ (or simply $\EQ$ when $n$ is clear from context), and defined as $\EQ_n(x, y) = 1$ iff $x = y$. This is arguably the simplest and most basic problem in communication complexity.
It is well known that its deterministic communication complexity equals $n$, which is maximal. However,
Yao~\cite{Yao79} already showed that if we allow some small constant error probability, then the communication complexity becomes much smaller.
In this paper we pin down the small-error communication complexity of \emph{Equality} in various communication models. Our bounds are essentially optimal both in terms of $n$ and in terms of the error.
While our optimal upper bounds only give small improvements over known bounds, Equality is such a fundamental communication problem that we feel it is worthwhile to pin down its complexity as precisely as possible and to find protocols that are as efficient as possible.

\subsection{Prior work}

Given a function $F : \zone^n \times \zone^n \to \zone$, define the $2^n \times 2^n$ \emph{communication matrix} of $F$, denoted $M_F$, by $M_F(x, y) = F(x, y)$.
Define the \emph{$\eps$-approximate rank} of a matrix $M$, denoted $\rankeps(M)$, to be the minimum number of rank-1 matrices needed such that their sum is $\eps$-close to $M$ entrywise (equivalently, $\rankeps(M)$ is the minimum rank among all matrices that are $\eps$-close to $M$ entrywise). If the rank-1 matrices are additionally constrained to be entrywise nonnegative, then the resulting measure is called the
\emph{$\eps$-approximate nonnegative-rank} of $M$, denoted $\ranknonneg{\eps}(M)$. By definition, $\ranknonneg{\eps}(M_F) \geq \rankeps(M_F)$.
Denote $\eps$-error randomized communication complexity by $\Rpri{\eps}(\cdot)$ when the players have access to private randomness, and $\Rpub{\eps}(F)$ when the players have access to public randomness (i.e., shared coin flips). Let $\Q{\eps}(\cdot)$ denote $\eps$-error quantum communication complexity, assuming private randomness. In all quantum communication models under consideration in this paper, except for the last one, Alice and Bob do not have access to pre-shared entanglement.

Krause~\cite{Kra96} showed the following lower bound on the randomized communication complexity of a Boolean function in terms of the approximate nonnegative-rank of its communication matrix.
\begin{theorem}[\cite{Kra96}]\label{thm: private coin approximate rank lower bound}
Let $F : \zone^n \times \zone^n \to \zone$ be a Boolean function and $\eps > 0$. Then,
\[
\Rpri{\eps}(F) \geq \log \ranknonneg{\eps}(M_F).
\]
\end{theorem}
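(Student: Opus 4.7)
The plan is to start from an optimal private-coin protocol $\Pi$ for $F$ of cost $c = \Rpri{\eps}(F)$ and error $\eps$, and explicitly construct a nonnegative matrix of rank at most $2^c$ that entrywise $\eps$-approximates $M_F$. Let $p(x,y)$ be the acceptance probability of $\Pi$ on input $(x,y)$, i.e., the probability (over Alice's private randomness $r_A$ and Bob's private randomness $r_B$) that $\Pi$ outputs $1$. By the correctness guarantee, $|p(x,y) - F(x,y)| \le \eps$ for every $(x,y)$, and $p(x,y) \in [0,1]$, so the matrix $P$ with entries $p(x,y)$ is a nonnegative entrywise $\eps$-approximation to $M_F$. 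It therefore suffices to prove $\ranknonneg{\eps}(P) \le 2^c$.

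The key step is the factorization of leaf-reaching probabilities in a private-coin protocol. Fix the protocol tree; every leaf $\ell$ corresponds to a transcript $t_\ell \in \zone^{\le c}$, and since $\Pi$ communicates at most $c$ bits, there are at most $2^c$ leaves. For a fixed leaf $\ell$, call a pair $(r_A, r_B)$ \emph{consistent with $\ell$ on input $(x,y)$} if running $\Pi$ on $(x,y,r_A,r_B)$ produces transcript $t_\ell$. The crucial point is that whether $r_A$ is consistent depends only on the bits that Alice herself sends (each of which is a deterministic function of $x$, $r_A$, and the preceding transcript), so the event ``$r_A$ consistent with $\ell$ on $x$'' is determined by $(x, r_A)$ alone; symmetrically for Bob. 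Since $r_A$ and $r_B$ are independent,
\[
\Pr_{r_A, r_B}[\Pi(x,y) \text{ reaches leaf } \ell] \;=\; a_\ell(x)\, b_\ell(y),
\]
where $a_\ell(x) := \Pr_{r_A}[r_A \text{ consistent with } \ell \text{ on } x] \ge 0$ and $b_\ell(y)$ is defined analogously.

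Summing over the set $L_1$ of leaves labeled with output $1$, one obtains
\[
p(x,y) \;=\; \sum_{\ell \in L_1} a_\ell(x)\, b_\ell(y),
\]
which expresses $P$ as a sum of at most $|L_1| \le 2^c$ entrywise nonnegative rank-$1$ matrices $a_\ell b_\ell^{\!\top}$. Hence $\ranknonneg{\eps}(M_F) \le \ranknonneg{}(P) \le 2^c$, and taking logarithms yields $\Rpri{\eps}(F) \ge \log \ranknonneg{\eps}(M_F)$.

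The only substantive step is the factorization lemma; everything else is bookkeeping. The main subtlety to get right is the independence of $r_A$ and $r_B$ (which fails for public randomness, consistent with the fact that the statement is specifically about $\Rpri{\eps}$ rather than $\Rpub{\eps}$) and the inductive argument on the depth of the protocol tree showing that the consistency event for Alice truly depends only on $(x, r_A)$—technically one verifies by induction on the number of messages sent that, conditioned on the transcript so far matching the prefix of $t_\ell$, the next bit Alice sends is determined by $(x, r_A)$ and the prefix, and similarly for Bob.
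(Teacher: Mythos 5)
Your argument is correct and complete. Note that the paper itself does not prove this statement at all---it is quoted from Krause [Kra96]---so there is no in-paper proof to compare against; what you give is the standard (and essentially Krause's) argument: the transcript/leaf-reaching probabilities of a private-coin protocol factor as $a_\ell(x)\,b_\ell(y)$ with $a_\ell, b_\ell \geq 0$ because Alice's consistency with a fixed transcript depends only on $(x,r_A)$, Bob's only on $(y,r_B)$, and the two randomness sources are independent; summing over the at most $2^c$ leaves (a binary protocol tree of depth $c$ has at most $2^c$ leaves) labeled $1$ exhibits the acceptance-probability matrix, which $\eps$-approximates $M_F$ entrywise, as a sum of at most $2^c$ nonnegative rank-$1$ matrices, so $\ranknonneg{\eps}(M_F) \leq 2^{\Rpri{\eps}(F)}$. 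You also correctly identify where privacy of the randomness is used, which is exactly why the bound fails for $\Rpub{\eps}$. One cosmetic remark: if one works in a model where the output is computed by the receiving player from the transcript together with her input and private randomness (rather than being readable from the transcript alone), the same proof goes through by absorbing the output decision into that player's factor $b_\ell(y)$, which preserves nonnegativity.
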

Analogous to this, the following lower bound is known on the quantum communication complexity of a Boolean function, due to Nielsen~\cite{nielsen:thesis} and Buhrman and de Wolf~\cite{BdW01}.
\begin{theorem}[\cite{nielsen:thesis,BdW01}]\label{thm: quantum approximate rank lower bound}
Let $F : \zone^n \times \zone^n \to \zone$ be a Boolean function and let $\eps > 0$. Then,
\[
\Q{\eps}(F) \geq \frac{1}{2}\log \rankeps(M_F).
\]
\end{theorem}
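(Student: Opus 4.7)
The plan is to show that any $\eps$-error quantum protocol for $F$ with cost $c$ qubits produces an acceptance probability matrix whose rank is at most $4^c$; since this matrix is entrywise $\eps$-close to $M_F$, this yields $\rankeps(M_F) \leq 4^c$, i.e., $c \geq \frac{1}{2}\log \rankeps(M_F)$. So the whole proof reduces to a structural claim about the rank of the acceptance probability matrix.

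First I would fix an $\eps$-error quantum protocol $P$ for $F$ of cost $c$, and (by purifying the private randomness and the ancilla registers) assume without loss of generality that at every step Alice and Bob apply a local unitary depending on their input, followed by swapping a message register. Writing $|\psi_{x,y}\rangle$ for the joint state of Alice's and Bob's registers just before the final measurement, I would invoke the standard ``Kremer / Yao'' decomposition: after $c$ qubits of communication the Schmidt rank of $|\psi_{x,y}\rangle$ across the Alice--Bob cut is at most $2^c$, and one can expand
\[
|\psi_{x,y}\rangle = \sum_{i=1}^{2^c} |A_i(x)\rangle \otimes |B_i(y)\rangle,
\]
where each $|A_i(x)\rangle$ depends only on Alice's input and local operations, and similarly $|B_i(y)\rangle$ depends only on $y$. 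This decomposition is proved by induction on the rounds: a local unitary preserves Schmidt rank on its own side, and a message of $k$ qubits can multiply the Schmidt rank by at most $2^k$.

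Next, letting $\Pi$ denote the projector onto the accepting outcomes of the final measurement (say on Alice's side, padded with identity on Bob's), the acceptance probability is
\[
A(x,y) = \langle \psi_{x,y} | (\Pi \otimes I) | \psi_{x,y}\rangle = \sum_{i,j=1}^{2^c} \langle A_i(x)| \Pi |A_j(x)\rangle \cdot \langle B_i(y)| B_j(y)\rangle.
\]
Each of the $4^c$ summands is a product of a function of $x$ alone and a function of $y$ alone, so the matrix $A = (A(x,y))_{x,y}$ has rank at most $4^c$. Since $P$ has error at most $\eps$, we have $|A(x,y) - M_F(x,y)| \leq \eps$ for all $(x,y)$, hence $\rankeps(M_F) \leq \rank(A) \leq 4^c$, and taking $\log$ gives the theorem.

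The main technical point is the Schmidt-rank / Kremer decomposition of $|\psi_{x,y}\rangle$, which I expect to be the one place where a careful inductive argument on the rounds is needed; everything else is bookkeeping. A minor subtlety is handling protocols that use intermediate measurements or private randomness, but these can be deferred to the end of the protocol via standard purification and the principle of deferred measurement, so they do not affect the rank bound.
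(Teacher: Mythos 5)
Your proposal is correct and follows essentially the same route as the argument the paper relies on (the Kremer--Yao decomposition of the final state into at most $2^c$ product terms with Alice's vectors depending only on $x$ and Bob's only on $y$, giving an acceptance-probability matrix of rank at most $4^c$ that $\eps$-approximates $M_F$); this is exactly the cited Buhrman--de Wolf proof, and it mirrors the paper's own Appendix~\ref{app: qepspsd} proof of the psd-rank analogue. No gaps: the purification/deferred-measurement reduction and the rank-$4^c$ bookkeeping are handled correctly.
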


A similar proof as that of~\cite{BdW01} can be used to show that the quantum communication complexity of a Boolean function is bounded below by the logarithm of its \emph{approximate psd-rank}, which we define below. Let $M$ be a matrix with nonnegative real entries.  A rank-$d$ \emph{psd-factorization} of $M$ consists of a set of $d\times d$ complex\footnote{Often this definition is restricted to real matrices. This can change the psd-rank by a constant factor, but no more than that~\cite[Section 3.3]{LWW:psdrank}.} psd matrices $A_i$ (one for each row of~$M$) and $B_j$ (one for each column of~$M$), such that for all $i,j$ we have $M_{ij}=\tr(A_i B_j)$.
The \emph{psd-rank} of $M$, denoted $\rankpsd{}(M)$, is the minimal~$d$ for which $M$ has such a psd factorization.  This notion has gained a lot of interest in areas such as semidefinite optimization, communication complexity, and others. See Fawzi et al.~\cite{FGPRT15} for an excellent survey. The \emph{$\eps$-approximate psd-rank} of $M$, which we denote by $\rankpsd{\eps}(M)$, is the minimum psd-rank among all matrices that are $\eps$-close to $M$ entrywise.

\begin{theorem}\label{thm: qeps apxpsd rank lower bound}
Let $F : \zone^n \times \zone^n \to \zone$ be a Boolean function and let $\eps > 0$. Then,
\[
\Q{\eps}(F) \geq \log \rankpsd{\eps}(M_F) + 1.
\]
\end{theorem}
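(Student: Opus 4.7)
The plan is to follow the template of Theorem~\ref{thm: quantum approximate rank lower bound} but refine the factorization of the acceptance-probability matrix to respect psd structure, exploiting the form of the last round of communication to save a factor of two. The main tool is Kremer's decomposition of the state produced by a quantum protocol into a sum over transcripts of products of Alice-only and Bob-only vectors.

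Fix a $c$-qubit protocol computing $F$ to error at most $\eps$, decomposed into rounds of one qubit each. By swapping the roles of Alice and Bob if necessary (which leaves both $\Q{\eps}(F)$ and $\rankpsd{\eps}(M_F)$ unchanged, since quantum communication is symmetric and the psd-rank is transpose-invariant), assume the last round is Alice sending one qubit to Bob, and that the final measurement $\Pi_A$ is on Alice's private register. Applying Kremer's decomposition to the state after the first $c-1$ rounds gives
\[
\ket{\phi^{c-1}_{x,y}} = \sum_{m \in \{0,1\}^{c-1}} \ket{\alpha_m^x}_A \otimes \ket{\beta_m^y}_B
\]
for (unnormalized) vectors $\ket{\alpha_m^x}, \ket{\beta_m^y}$ on the private registers. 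After Alice's final unitary on her register plus a fresh qubit and the transmission of that qubit, the state becomes
\[
\ket{\phi_{x,y}} = \sum_{m, b} \ket{\alpha_{m,b}^x}_A \otimes \ket{b}_{B'} \otimes \ket{\beta_m^y}_B,
\]
where $\ket{b}_{B'}$ is the just-transmitted qubit (now on Bob's side) and, crucially, $\ket{\beta_m^y}$ does not depend on $b$. Expanding $P(x,y) = \langle \phi_{x,y} | \Pi_A \otimes I_B | \phi_{x,y}\rangle$ and using $\langle b | b'\rangle = \delta_{b,b'}$ to kill all cross-$b$ terms on Bob's side yields $P(x,y) = \tr(A_x B_y)$, where the $2^{c-1}\times 2^{c-1}$ matrices $A_x$ and $B_y$ have entries $(A_x)_{m',m} := \sum_b \langle \alpha_{m,b}^x | \Pi_A | \alpha_{m',b}^x\rangle$ and $(B_y)_{m,m'} := \langle \beta_m^y | \beta_{m'}^y\rangle$. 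Each summand defining $A_x$ is psd, being (up to transposition) the Gram matrix of the vectors $\Pi_A^{1/2}\ket{\alpha_{m,b}^x}$, and $B_y$ is the Gram matrix of the $\ket{\beta_m^y}$'s; hence both $A_x$ and $B_y$ are psd.

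This gives a psd factorization of $P$ of size $2^{c-1}$, so $\rankpsd{}(P) \leq 2^{c-1}$. Since the protocol has error at most $\eps$, we have $|P(x,y) - M_F(x,y)| \leq \eps$ for all $(x,y)$, which yields $\rankpsd{\eps}(M_F) \leq \rankpsd{}(P) \leq 2^{c-1}$, i.e., $\Q{\eps}(F) = c \geq \log\rankpsd{\eps}(M_F) + 1$. The most delicate step is the factor-of-two saving: a naive application of Kremer's decomposition yields only $\rankpsd{}(P) \leq 2^c$, and the improvement to $2^{c-1}$ comes from using orthogonality of the basis states of the last transmitted qubit to collapse Bob's Gram matrix into two identical blocks whose ``copies'' on Alice's side can be added into a single psd matrix. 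Arranging, via the symmetry reduction above, that the measurer sits on the receiving end of the last round is what makes this collapse available.
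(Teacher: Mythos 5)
Your linear algebra is sound: the Kremer decomposition after $c-1$ rounds, the psd-ness of your $A_x$ and $B_y$, and the identity $P(x,y)=\tr(A_xB_y)$ all check out. The problem is the normal-form step that produces the factor-of-two saving, which is the entire content of the ``$+1$''. Swapping Alice and Bob can ensure that the last qubit travels from Alice to Bob, \emph{or} that the output measurement sits on Alice's side, but not both at once: in any standard convention the output is determined either by measuring the last qubit placed on the channel, or by the \emph{receiver} of the last message measuring a register that includes that qubit. Your written setup, with measurement $\Pi_A\otimes I$ acting only on the sender's retained register, describes a protocol in which the final transmitted qubit is never used for the output, i.e.\ effectively a $(c-1)$-qubit protocol with a wasted round; an arbitrary $c$-qubit protocol cannot be assumed to have this form without changing its cost or error. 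Your closing sentence even contradicts the setup: if, as it says, the measurer sits on the \emph{receiving} end and measures $B'$ together with Bob's register, then the orthogonality step $\braket{b}{b'}=\delta_{b,b'}$ no longer kills the cross-$b$ terms (the projector acts on $B'$), the natural index set becomes the pairs $(m,b)$, and the argument only yields $\rankpsd{}(P)\le 2^{c}$, i.e.\ the theorem without the $+1$.

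The repair is the convention the paper adopts and states explicitly: the protocol is unitary and the output is the computational-basis measurement of the last qubit put on the channel. Under that convention the acceptance probability keeps only the $b=1$ terms of your expansion, rather than summing over $b$ into $A_x$: $P(x,y)=\sum_{m,m'}\braket{\alpha^x_{m',1}}{\alpha^x_{m,1}}\,\braket{\beta^y_{m'}}{\beta^y_{m}}$, which is a psd (Gram) factorization of size $2^{c-1}$ --- exactly the paper's proof, phrased round-by-round instead of via the full transcript sum restricted to $i_\ell=1$. So the gap is not in the algebra but in where the final measurement is allowed to live; that is precisely where the $+1$ is earned, and the ``by symmetry'' reduction as written does not earn it.
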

For completeness, we prove this in Appendix~\ref{app: qepspsd}. It is easy to show that $\rankpsd{\eps}(M_F) \leq \ranknonneg{\eps}(M_F)$.
Alon~\cite{Alon09} showed the following bounds on the approximate rank of the Identity matrix.

\begin{theorem}[\cite{Alon09}]\label{thm: alon}
There exists a positive constant $c$ such that the following holds for all integers $n > 0$ and $1/2^{n/2} \leq \eps \leq 1/4$. Let $I$ denote the $2^n \times 2^n$ Identity matrix. Then,
\begin{align*}
    \rankeps(I) \geq \frac{cn}{\eps^2\log\rbra{\frac{1}{\eps}}}.
\end{align*}
\end{theorem}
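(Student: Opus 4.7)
The plan is to argue via a Hadamard-power trick combined with a trace-rank inequality, an approach that I expect gains a factor of $1/\eps$ over what one would obtain from entrywise polynomial approximation followed by diagonal dominance. Suppose $I = I_{2^n}$ admits an entrywise $\eps$-approximation $M$ of rank $r$. As a first reduction, I would replace $M$ by the symmetrization $(M + M^T)/2$ and then conjugate by the positive diagonal matrix with entries $1/\sqrt{M_{ii}}$ (well-defined since $M_{ii} \geq 1 - \eps > 0$). This yields a symmetric matrix, still of rank at most $2r$, with unit diagonal and $|M_{ij}| \leq 2\eps$ for $i \neq j$; I rename this matrix $M$ for the remainder.

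The core step is to introduce the entrywise $2k$-th power $M^{\odot 2k}$ for a carefully chosen integer $k$. Two facts about $M^{\odot 2k}$ drive the argument. First, since each entry $(u_i \cdot v_j)^{2k}$ expands in the basis of degree-$2k$ monomials on $r$ variables, one gets the rank upper bound
\[
\rank(M^{\odot 2k}) \;\leq\; \binom{r + 2k - 1}{2k}.
\]
Second, $M^{\odot 2k}$ remains symmetric with unit diagonal and off-diagonals of magnitude at most $(2\eps)^{2k}$; writing $N = 2^n$, this gives $\tr(M^{\odot 2k}) = N$ and $\tr((M^{\odot 2k})^2) \leq N + N^2 (2\eps)^{4k}$.

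The key analytic move is the rank-trace inequality $\rank(A) \geq \tr(A)^2/\tr(A^2)$, valid for any real symmetric $A$ by Cauchy--Schwarz applied to the eigenvalues. Choosing $k = \lceil \log N / (4 \log(1/(2\eps))) \rceil$ forces $N(2\eps)^{4k} \leq 1$, so the inequality applied to $M^{\odot 2k}$ gives $\rank(M^{\odot 2k}) \geq N/2$. Combining with the upper rank bound and the estimate $\binom{r + 2k - 1}{2k} \leq (er/(2k))^{2k}$ (valid whenever $r \geq 2k$), then taking logarithms and isolating $r$, yields
\[
r \;\geq\; \Omega\!\left( \frac{k}{\eps^2} \right) \;=\; \Omega\!\left( \frac{\log N}{\eps^2 \log(1/\eps)} \right) \;=\; \Omega\!\left( \frac{n}{\eps^2 \log(1/\eps)} \right),
\]
which is the advertised bound.

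The main conceptual hurdle is exactly the choice of using an $L^2$ trace inequality rather than an $L^\infty$ diagonal-dominance criterion on $M^{\odot 2k}$: the latter would require roughly doubling $k$ to make the off-diagonal row-sums less than $1$, and the final rank estimate would then weaken by a factor of $1/\eps$ (yielding only $\Omega(n/(\eps\log(1/\eps)))$). The remaining work is bookkeeping: tracking the constant-factor losses through symmetrization and diagonal normalization, verifying that $r \geq 2k$ in the hypothesized regime $2^{-n/2} \leq \eps \leq 1/4$ (which I would do by contradiction, noting that the target lower bound already forces $r$ to be much larger than $k$), and checking that $k \geq 1$, a condition that is enforced exactly by the hypothesis $\eps \geq 2^{-n/2}$.
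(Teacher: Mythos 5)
Your overall route---symmetrize and renormalize the approximating matrix, amplify with an entrywise power, bound the rank of the Hadamard power by a monomial count, and lower-bound it via $\rank(A)\geq \tr(A)^2/\tr(A^2)$---is essentially Alon's own argument (the paper does not prove this theorem; it cites Alon), but your parameter choice leaves a genuine gap over part of the stated range $2^{-n/2}\leq\eps\leq 1/4$. From $\binom{2r+2k-1}{2k}\geq\rank(M^{\odot 2k})\geq N/2$ you can only conclude $e\,r/k\gtrsim 2^{(n-1)/(2k)}$, and for your final step you need $2^{(n-1)/(2k)}=\Omega(1/\eps^2)$, i.e.\ $(n-1)/(2k)\geq 2\log\rbra{1/(2\eps)}-O(1)$. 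With $k=\lceil n/(4\log(1/(2\eps)))\rceil$ the rounding-up can depress the exponent $(n-1)/(2k)$ below $2\log(1/(2\eps))$ by roughly $8\log^2(1/\eps)/n$, so the conclusion $r=\Omega(k/\eps^2)$ only follows when $\log(1/\eps)=O(\sqrt{n})$; for smaller $\eps$ the loss is a superconstant factor $2^{\Theta(\log^2(1/\eps)/n)}$. Concretely, for $\eps=2^{-n/3}$ your scheme forces $k=1$ and yields only $r\gtrsim 2^{n/2}$, while the theorem asserts $\Omega(2^{2n/3})$; for $\eps$ near $2^{-n/2}$ you get about $2^{n/2}$ against a claimed $\Omega(2^n)$. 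Relatedly, the hypothesis $\eps\geq 2^{-n/2}$ is not what guarantees $k\geq 1$ in your argument (a ceiling is always at least $1$); it is needed elsewhere, which is a symptom of the misparameterization.

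The fix is local and is exactly Alon's choice: take the exponent to be $k=\lfloor n/(2\log(1/\eps'))\rfloor$, where $\eps'$ is the normalized off-diagonal bound (here $\eps\geq 2^{-n/2}$ is precisely what makes $k\geq 1$), so that $N(\eps')^{2k}\geq 1$, and then use the complementary branch of the trace bound: $\rank(M^{\odot k})\geq N^2/\rbra{N+N^2(\eps')^{2k}}\geq \tfrac12 (1/\eps')^{2k}$. Comparing with $\binom{2r+k-1}{k}\leq \rbra{e(2r+k)/k}^k$ gives $e(2r+k)/k\geq 2^{-1/k}(1/\eps')^{2}$, hence $r=\Omega(k/\eps^2)=\Omega\rbra{n/(\eps^2\log(1/\eps))}$ uniformly over the whole range, with no loss from the integrality of $k$ because the target $(1/\eps'^2)^k$ scales with $k$ in the right way. (There is also no need to restrict to even powers: the symmetrized matrix with an odd entrywise power still has unit trace contribution per row and the off-diagonals enter $\tr(A^2)$ squared, so odd $k$ is fine and removes the extra factor-of-two granularity.) Your remaining steps---symmetrization, diagonal scaling, the monomial-count rank bound for Hadamard powers, and Cauchy--Schwarz on the eigenvalues---are correct up to constant-factor bookkeeping.
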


Note that the $2^n \times 2^n$ Identity matrix is the communication matrix of the $n$-bit Equality function. Theorems~\ref{thm: private coin approximate rank lower bound} and~\ref{thm: alon} thus imply that for $1/2^{n/2} \leq \eps \leq 1/4$,
\begin{equation}\label{eqn: rpri eq lower bound}
\Rpri{\eps}(\EQ_n) \geq \log\rbra{\frac{n}{\eps^2}} - \log \log \rbra{\frac{1}{\eps}} - O(1).
\end{equation}

Newman~\cite{New91} proved the following theorem that shows that public-coin protocols can be converted to private-coin protocols with an additive error, with a small additive cost in the communication. For the following form, see for example,~\cite[Claim 3.14]{KN97}.

\begin{theorem}[{cf.~\cite[Claim 3.14]{KN97}}]\label{thm: newman additive}
Let $F : \zone^n \times \zone^n$ be a Boolean function. For every $\delta > 0$ and every $\eps > 0$,
\[
\Rpri{\eps + \delta}(F) \leq \Rpub{\eps}(F) + \log\rbra{\frac{n}{\delta^2}} + O(1).
\]
\end{theorem}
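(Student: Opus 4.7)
The plan is to prove Newman's theorem by the standard probabilistic (sampling) argument, selecting a small pool of random strings from the public-coin distribution and having Alice communicate an index into this pool using her private randomness.

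First, I would start from an optimal public-coin protocol $P$ for $F$ with cost $c=\Rpub{\eps}(F)$ and error $\eps$. View $P$ as a distribution over deterministic protocols, indexed by the public random string $r$: for every input $(x,y)$, $\Pr_r[P_r(x,y)\neq F(x,y)]\le \eps$. The idea is to replace this continuous (or large) distribution over $r$'s by a uniform distribution over a small set $r_1,\dots,r_T$ of representatives. Once such representatives exist, Alice privately samples an index $i\in[T]$, sends $i$ to Bob using $\lceil\log T\rceil$ bits, and they execute $P_{r_i}$. The total communication is $c+\log T+O(1)$, which matches the claimed bound as long as we can take $T=O(n/\delta^2)$.

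The existence of such a set follows from a standard Chernoff/Hoeffding plus union bound argument. Fix $(x,y)$ and draw $r_1,\dots,r_T$ independently from the public-coin distribution. Let $X_i$ be the indicator that $P_{r_i}$ errs on $(x,y)$; these are independent Bernoullis with mean at most $\eps$, so by Hoeffding
\[
\Pr\!\left[\frac{1}{T}\sum_{i=1}^T X_i > \eps+\delta\right] \le \exp(-2T\delta^2).
\]
Taking a union bound over all $2^{2n}$ pairs $(x,y)\in\zone^n\times\zone^n$, the probability that some input has error fraction exceeding $\eps+\delta$ is at most $2^{2n}\exp(-2T\delta^2)$, which is less than $1$ as soon as $T>n\ln 2/\delta^2$. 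Hence there exists a fixed choice $r_1,\dots,r_T$ with $T=\lceil n/\delta^2\rceil$ such that the induced private-coin protocol errs with probability at most $\eps+\delta$ on every input.

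I don't foresee a real obstacle here: the argument is entirely standard and the only thing to be careful about is the constants in the Chernoff bound so that $T=O(n/\delta^2)$ suffices (one needs the union bound to beat $2^{2n}$, not just $2^n$, which is why the sample size depends on $n$ rather than $n/2$; this gets absorbed into the $\log(n/\delta^2)+O(1)$ in the statement). The only mild subtlety is that Alice's message is $\lceil\log T\rceil$ deterministic bits, so we use an additive $O(1)$ to fold in both the ceiling and the fact that $T$ may be chosen as an integer slightly larger than $n/\delta^2$; everything else is bookkeeping.
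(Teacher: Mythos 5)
Your proof is correct and is essentially the same sampling argument the paper relies on for this statement (which it cites from Kushilevitz--Nisan rather than reproving): draw $T = O(n/\delta^2)$ public strings, apply a concentration bound plus a union bound over all $2^{2n}$ inputs, and have Alice send the index of a sampled string. The only difference from the paper's own Theorem~\ref{thm: improved newman new} is that you use the additive Hoeffding bound where the paper's improvement substitutes a multiplicative Chernoff bound, which is exactly what distinguishes the additive statement you were asked to prove from the paper's sharpened version.
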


\noindent
Relabeling variables, Theorem~\ref{thm: newman additive} is equivalent to
\[
\Rpri{\eps(1+\delta)}(F) \leq \Rpub{\eps}(F) + \log\rbra{\frac{n}{\eps^2\delta^2}} + O(1).
\]

\subsection{Our results}
In this section we list our results, first those for randomized communication complexity, and then those for quantum communication complexity. 

\subsubsection{Randomized communication complexity}

We give an improved version of Newman's theorem (Theorem~\ref{thm: newman additive}), which allows us to convert a public-coin protocol to a private-coin one with an optimal dependence on the error. Our proof follows along similar lines as that of Newman's. Our key deviation is that we use a multiplicative form of the Chernoff bound, where previously an additive version was used.

\begin{theorem}\label{thm: improved newman new}
Let $F : \zone^n \times \zone^n \to \zone$ be a Boolean function. For all $\eps \in [0, 1/2)$ and all $\delta \in (0, 1]$,
\[
\Rpri{\eps(1+\delta)}(F) \leq \Rpub{\eps}(F) + \log \rbra{\frac{n}{\eps}}+ \log\rbra{\frac{6}{\delta^2}}.
\]
\end{theorem}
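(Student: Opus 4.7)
The plan is to follow the structure of Newman's original proof of Theorem~\ref{thm: newman additive}, with the single crucial modification that we replace the additive Chernoff (Hoeffding) bound used there by a multiplicative Chernoff bound. This is what will buy us the extra factor of $\eps$ inside the logarithm. Concretely, I would start from an optimal public-coin protocol $\Pi$ of cost $c = \Rpub{\eps}(F)$, and for each input $(x,y)$ let $p_{xy} \leq \eps$ denote its error probability over the public randomness.

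Next, sample $T$ public random strings $r_1, \dots, r_T$ independently and uniformly, and let $\Pi_t$ denote the deterministic protocol obtained by hardwiring $r_t$. For fixed $(x,y)$, the number of indices $t$ on which $\Pi_t$ errs on $(x,y)$ is a sum of $T$ independent Bernoulli$(p_{xy})$ variables with mean $\mu = T p_{xy} \leq T\eps$. Coupling each Bernoulli$(p_{xy})$ to a dominating Bernoulli$(\eps)$ and applying the multiplicative Chernoff bound $\Pr[X \geq (1+\delta)\mu] \leq \exp(-\mu\delta^2/3)$, valid for $\delta \in (0,1]$, the empirical error on $(x,y)$ exceeds $\eps(1+\delta)$ with probability at most $\exp(-T\eps\delta^2/3)$. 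Choosing $T = 6n/(\eps\delta^2)$ gives $T\eps\delta^2/3 = 2n$, so this probability is at most $e^{-2n} < 2^{-2n}$ (using $e > 2$), and a union bound over the $2^{2n}$ inputs succeeds. Fixing good strings $r_1^*,\dots,r_T^*$ and hardwiring them into the protocol, Alice and Bob use $\log T = \log(n/\eps) + \log(6/\delta^2)$ bits of private randomness to sample $t \in [T]$ uniformly and run $\Pi_t$, yielding a private-coin protocol of the required cost and error.

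The only step that differs from Newman's proof is the use of multiplicative rather than additive Chernoff, and this is precisely the source of the improvement: the multiplicative tail bound decays with the mean $T\eps$ rather than just with $T$, which saves a factor of $\eps$ in the required sample size and hence a $\log(1/\eps)$ in the additive cost. There is no serious technical obstacle; the only care required is in the coupling argument for inputs with $p_{xy} < \eps$ and in the usual integer-rounding of $T$, both of which are routine.
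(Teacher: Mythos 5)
Your proposal is correct and follows essentially the same route as the paper's proof: sample $B=6n/(\delta^2\eps)$ public strings, apply the multiplicative Chernoff bound to get a per-input failure probability below $2^{-2n}$, union bound over all $2^{2n}$ inputs, and let Alice privately sample an index at cost $\log B = \log(n/\eps)+\log(6/\delta^2)$. The paper handles inputs with error below $\eps$ by assuming WLOG the error is exactly $\eps$, which is the same monotonicity fact you address via coupling; no substantive difference.
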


To compare Theorem~\ref{thm: newman additive} and Theorem~\ref{thm: improved newman new}, consider the $(1/n)$-error private-coin randomized communication complexity of $\EQ_n$. The $\eps$-error \emph{public-coin} communication complexity of $\EQ_n$ is at most $\log(1/\eps)$ (and this can be shown to be tight up to an additive constant).
Thus, Theorem~\ref{thm: newman additive} can at best give an upper bound of
\[
\Rpri{1/n}(\EQ_n) \leq \log n + \log (n^3) + O(1) = 4\log n + O(1).
\]
Equation~\eqref{eqn: rpri eq lower bound} implies a non-matching lower bound $\Rpri{1/n}(\EQ_n) \geq 3\log n - \log \log n - O(1)$. On the other hand, Theorem~\ref{thm: improved newman new} implies a tight upper bound (up to the additive $\log\log n + O(1)$ term) of $3\log n + O(1)$ on the $(1/n)$-error private-coin communication complexity of $\EQ_n$, by converting the $\log(1/\eps)$-cost public-coin protocol for $\EQ_n$ to a private-coin protocol.

\begin{theorem}\label{thm: rand upper bound}
For all positive integers $n > 0$ and for all $\eps \in [0, 1/2)$,
\begin{align*}
    \Rpri{\eps}(\EQ) & \leq \log\rbra{\frac{n}{\eps^2}} + 4.
\end{align*}
\end{theorem}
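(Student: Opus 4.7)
The plan is to construct a private-coin protocol for $\EQ$ directly via random linear hashing. Plugging Theorem~\ref{thm: improved newman new} into the standard $\lceil\log(1/\eps)\rceil$-cost public-coin protocol for $\EQ$ only gives about $\log(n/\eps^2) + 5.58$, so we must follow the idea behind Theorem~\ref{thm: improved newman new} but specialize to $\EQ$. Two such specializations save constants: (i) the error of a linear-hash protocol on an input pair $(x,y)$ depends only on $z = x \oplus y$, so the relevant union bound runs over at most $2^n$ nonzero vectors rather than $2^{2n}$ pairs; and (ii) we use the sharp multiplicative Chernoff bound $\Pr[S \geq \alpha\E[S]] \leq \exp(-\phi(\alpha)\E[S])$ for $\alpha > 1$, with $\phi(\alpha) := \alpha\ln\alpha - \alpha + 1$, rather than the looser $\exp(-\mu\delta^2/3)$ form.

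Fix parameters as follows. Pick an integer $k := \lceil\log(4/\eps)\rceil$, so that $\alpha := 2^k\eps$ lies in $[4, 8)$. Let $M$ be the smallest power of $2$ with $M \geq n\ln 2 \cdot 2^k/\phi(\alpha)$, and write $M = 2^m$. The core claim (to be proved below) is that there exist $k\times n$ binary matrices $A_1, \ldots, A_M$ satisfying $|\{i \in [M] : A_iz = 0\}| \leq \eps M$ for every nonzero $z \in \zone^n$. Once such matrices are fixed in advance as part of the protocol specification, the private-coin protocol is: Alice picks $i \in [M]$ uniformly with her private randomness and sends the $(m+k)$-bit message $(i, A_ix)$; Bob accepts iff $A_ix = A_iy$. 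Correctness is immediate by applying the claim to $z = x \oplus y$ when $x \neq y$.

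To prove the existence of the matrices, sample each $A_i$ uniformly and independently at random. For any fixed nonzero $z$, the indicators $\mathbf{1}[A_iz = 0]$ are independent Bernoulli with mean $2^{-k}$, and the Chernoff bound from paragraph one, applied with multiplicative factor $\alpha = \eps \cdot 2^k$, yields
\[
\Pr\!\left[|\{i : A_iz = 0\}| \geq \eps M\right] \leq \exp\!\left(-\frac{M\phi(\alpha)}{2^k}\right) \leq 2^{-n}
\]
by the choice of $M$. A union bound over the fewer than $2^n$ nonzero $z$'s completes the probabilistic-method argument.

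For the cost, $M \leq 2n\ln 2 \cdot 2^k/\phi(\alpha)$ gives $m \leq \log(n\ln 2/\phi(\alpha)) + k + 1$, so using $k = \log(\alpha/\eps)$,
\[
m + k \leq \log\!\left(\frac{n}{\eps^2}\right) + \log\!\left(\frac{\alpha^2\ln 2}{\phi(\alpha)}\right) + 1.
\]
The remaining step, and the main (minor) technical obstacle, is the elementary inequality $\alpha^2\ln 2/\phi(\alpha) \leq 8$ on $[4, 8)$, i.e.\ $f(\alpha) := 8(\alpha\ln\alpha - \alpha + 1) - \alpha^2\ln 2 \geq 0$ there. A short calculus check suffices: $f(4) = 48\ln 2 - 24 > 0$, and $f'(\alpha) = 2(4\ln\alpha - \alpha\ln 2)$ is nonnegative on $[4, 16)$ (since $\alpha^4 \geq 2^\alpha$ there), so $f$ is nondecreasing and hence nonnegative on $[4, 8)$. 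The substance is just multiplicative Chernoff applied to $M \approx n/\eps$ random linear hash matrices producing $k \approx \log(4/\eps)$ bits each; the work is in keeping the constants tight enough to land under $4$.
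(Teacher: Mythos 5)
Your proof is correct, but it takes a genuinely different route from the paper. The paper derives the theorem in one line, by applying its improved Newman theorem (Theorem~\ref{thm: improved newman new}) with $\delta=1$ to the naive public-coin protocol of cost $\log(2/\eps)$ and error $\eps/2$. You instead construct the private-coin protocol for $\EQ$ directly by the probabilistic method: a fixed family of $M=2^m$ random $k\times n$ parity-check matrices with $k=\lceil\log(4/\eps)\rceil$, exploiting linearity so that the error on a pair $(x,y)$ depends only on $z=x\oplus y$ (hence a union bound over the $2^n-1$ nonzero $z$ rather than over $2^{2n}$ pairs), and using the sharp multiplicative Chernoff bound $\Pr[S\ge\alpha\mu]\le e^{-\mu\phi(\alpha)}$ with $\phi(\alpha)=\alpha\ln\alpha-\alpha+1$ in place of the $e^{-\mu\delta^2/3}$ form stated in the paper's Lemma. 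I checked the details: $\alpha=2^k\eps\in[4,8)$; the choice of $M$ makes each bad event have probability at most $2^{-n}$, so good matrices exist; the cost $m+k$ is an honest integer number of bits; and your inequality $\alpha^2\ln 2\le 8\,\phi(\alpha)$ on $[4,8)$ is verified by your monotonicity argument ($f(4)=48\ln 2-24>0$ and $f'(\alpha)=2(4\ln\alpha-\alpha\ln 2)\ge 0$ since $\alpha^4\ge 2^\alpha$ on $[4,16]$), yielding cost at most $\log(n/\eps^2)+4$. As for what each approach buys: the paper's route is modular and immediate once Theorem~\ref{thm: improved newman new} is available, and applies to any public-coin protocol for any $F$; your route is self-contained and tighter on constants. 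Indeed your concern about the black-box route is justified: applying Theorem~\ref{thm: improved newman new} literally at public error $\eps/2$ and $\delta=1$ gives $\log(2/\eps)+\log(2n/\eps)+\log 6=\log(n/\eps^2)+\log 24\approx\log(n/\eps^2)+4.59$ even before rounding message lengths to integers, so your specialized construction is what actually secures the stated additive constant $4$ with integral communication.
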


This shows that Alon's theorem (Theorem~\ref{thm: alon}) is tight up to the $O(\log(1/\eps))$ factor, not only for approximate rank, but also for communication complexity.
Theorem~\ref{thm: rand upper bound} and Theorem~\ref{thm: private coin approximate rank lower bound} also imply that the approximate-rank lower bound in Theorem~\ref{thm: alon} is nearly tight even restricting to \emph{nonnegative} approximations to the Identity matrix.

\begin{corollary}\label{cor: our apx nonneg rank identity upper bound}
Let $n > 0$ be an integer, and let $I$ denote the $2^n \times 2^n$ Identity matrix. Then for all $\eps \in [0, 1/2)$,
\[
\ranknonneg{\eps}(I) \leq \frac{16n}{\eps^2}.
\]
\end{corollary}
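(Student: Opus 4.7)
The plan is to deduce the corollary directly from Theorem~\ref{thm: rand upper bound}, using the standard observation (which underlies the proof of Krause's Theorem~\ref{thm: private coin approximate rank lower bound}) that an $\eps$-error private-coin protocol of cost $c$ yields an entrywise nonnegative decomposition of an $\eps$-approximant to $M_F$ with at most $2^c$ rank-one summands. Concretely, I would first invoke Theorem~\ref{thm: rand upper bound} to obtain a private-coin protocol $\Pi$ for $\EQ_n$ of cost $c \leq \log(n/\eps^2) + 4$ with error $\eps$.

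Next I would argue the factorization explicitly. Because Alice and Bob use only private randomness, for each leaf (transcript) $t$ of $\Pi$ the probability that $\Pi$ produces $t$ on input $(x,y)$ factors as $\alpha_t(x)\beta_t(y)$, where $\alpha_t(x)\geq 0$ is the probability that Alice's messages are consistent with $t$ given input $x$ (and similarly for $\beta_t(y) \geq 0$). Writing $o(t) \in \zone$ for the output at leaf $t$, the acceptance matrix $P(x,y) = \Pr[\Pi \text{ outputs } 1 \mid x,y]$ equals
\[
P(x,y) \;=\; \sum_{t:\, o(t)=1} \alpha_t(x)\,\beta_t(y),
\]
which is a nonnegative sum of at most $2^c$ rank-one nonnegative matrices. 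By definition of the protocol's error, $P$ is entrywise $\eps$-close to $M_{\EQ_n} = I$, so
\[
\ranknonneg{\eps}(I) \;\leq\; 2^c \;\leq\; 2^{\log(n/\eps^2)+4} \;=\; \frac{16n}{\eps^2}.
\]

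There is essentially no real obstacle here: the only thing to be careful about is that the factorization step uses privacy of the randomness (which is what makes $\Pi$'s leaf-probabilities factor as $\alpha_t(x)\beta_t(y)$), and that Theorem~\ref{thm: rand upper bound} indeed provides a \emph{private-coin} protocol. Since that theorem is already stated in the private-coin model, this is automatic. The entire argument is a two-line consequence of Theorem~\ref{thm: rand upper bound} combined with the protocol-to-nonnegative-factorization conversion.
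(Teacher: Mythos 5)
Your proposal is correct and follows essentially the same route as the paper, which obtains the corollary by combining Theorem~\ref{thm: rand upper bound} with Krause's bound (Theorem~\ref{thm: private coin approximate rank lower bound}); you merely inline the standard proof of the latter (private-coin transcript probabilities factor as $\alpha_t(x)\beta_t(y)$, giving $\ranknonneg{\eps}(M_F)\leq 2^{\Rpri{\eps}(F)}$) instead of citing it as a black box.
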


To compare the performance of Theorem~\ref{thm: newman additive} with that of Theorem~\ref{thm: improved newman new} in a more general setting, we consider the natural problem of converting a public-coin protocol to a private-coin protocol while allowing the error to double.
Setting $\delta = \eps$ in Theorem~\ref{thm: newman additive} and relabeling parameters, we obtain
\[
\Rpri{\eps}(F) \leq \Rpub{\eps/2}(F) + \log\rbra{\frac{n}{\eps^2}} + O(1).
\]
However, Theorem~\ref{thm: improved newman new} yields the following improved dependence on $\eps$ by setting $\delta = 1$ and relabeling parameters.
\[
\Rpri{\eps}(F) \leq \Rpub{\eps/2}(F) + \log\rbra{\frac{n}{\eps}} + 4.
\]

\subsubsection{Quantum communication complexity}

Prior to this work, the best known lower bound on the $\eps$-error quantum communication complexity of Equality was $\Omega(\log(n/\eps))$~\cite[Proposition~3]{BdW01}, with a constant hidden in the $\Omega(\cdot)$ that is less than~$1/2$. 
The known Theorem~\ref{thm: quantum approximate rank lower bound} and Theorem~\ref{thm: alon} immediately imply that 
\begin{equation}\label{eqn: quantum lower bound apxrank}
\Q{\eps}(\EQ_n) \geq \log\rbra{\frac{\sqrt{n}}{\eps}} - \log \log \rbra{\frac{1}{\eps}} - O(1).
\end{equation}
In terms of upper bounds, we exhibit a \emph{one-way} quantum communication upper bound with an optimal dependence on $\eps$, that uses only pure-state messages (and hence does not use even private randomness). In particular, by choosing $\eps$ to be an arbitrary small polynomial in the input size, this implies that the factor of $1/2$ in Theorem~\ref{thm: quantum approximate rank lower bound} cannot be improved when $F = \EQ_n$. Let $\Qpureoneway{\eps}(F)$ be the $\eps$-error quantum communication complexity of $F$, where the protocols are one-way and Alice is only allowed to send a pure state to Bob. We show the following.

\begin{theorem}\label{thm: quant upper bound pure}
For all positive integers $n > 0$ and for all $\eps \in [0, 1/2)$,
\[
\Qpureoneway{\eps}(\EQ_n) \leq \log \rbra{\frac{n}{\eps}} + 4.
\]
\end{theorem}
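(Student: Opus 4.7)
The plan is to use quantum fingerprinting. Alice encodes her input $x$ as a unit vector $\ket{\phi_x}$ in an $m$-dimensional Hilbert space and sends it one-way to Bob, who performs the two-outcome projective measurement $\{\ketbra{\phi_y}{\phi_y},\, I - \ketbra{\phi_y}{\phi_y}\}$ and outputs ``equal'' iff the first outcome occurs. The ``equal'' probability is exactly $\abs{\braket{\phi_x}{\phi_y}}^2$, which is $1$ when $x = y$ and which we want to bound by $\eps$ when $x \neq y$. So it suffices to construct a family $\{\ket{\phi_x} : x \in \zone^n\}$ of $m$-dimensional unit vectors with $\abs{\braket{\phi_x}{\phi_y}} \leq \sqrt{\eps}$ for all $x \neq y$, using a dimension $m$ that is a power of two and satisfies $m \leq 16\, n/\eps$, so that Alice sends $\log m \leq \log(n/\eps) + 4$ qubits.

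The fingerprints will be built by the probabilistic method. Pick a uniformly random map $E : \zone^n \to \{-1, +1\}^m$ and set $\ket{\phi_x} = \tfrac{1}{\sqrt{m}}\sum_{i=1}^m E(x)_i \ket{i}$; this is a unit vector for every $x$. For each fixed $x \neq y$, $\braket{\phi_x}{\phi_y} = \tfrac{1}{m}\sum_{i=1}^m E(x)_i E(y)_i$ is a normalized sum of $m$ independent Rademacher variables, so Hoeffding's inequality yields $\Pr\bigl[\abs{\braket{\phi_x}{\phi_y}} > \sqrt{\eps}\bigr] \leq 2 e^{-m\eps/2}$. A union bound over the fewer than $2^{2n}$ unordered pairs $\{x,y\}$ shows that the failure probability is strictly below $1$ whenever $m > (4\ln 2)\, n/\eps$, so a good fingerprint family exists. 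Taking $m$ to be the smallest power of two exceeding this threshold yields $m \leq 16\, n/\eps$ and therefore the desired bound on the number of qubits; the protocol is automatically one-way and pure-state.

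The main (and essentially only) obstacle is squeezing the constants into the stated $+4$ budget: the probabilistic argument demands $m \gtrsim 2.77\, n/\eps$, and rounding up to the next power of two inflates this by at most a factor of $2$, still comfortably below $16\, n/\eps$. Everything else is automatic from the construction: completeness is perfect, the soundness error on unequal inputs is exactly $\abs{\braket{\phi_x}{\phi_y}}^2 \leq \eps$, and Alice's message is a single pure state, so no further structural work is required.
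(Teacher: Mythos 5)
Your proposal is correct and follows essentially the same route as the paper: a one-way fingerprinting protocol where Alice sends a $\pm 1$-phase state and Bob measures against the projector onto his own fingerprint, with the fingerprint family obtained by the probabilistic method plus a union bound over pairs. The only (cosmetic) difference is that you draw the signs from a fully random map and apply Hoeffding, whereas the paper draws them from a random linear code and bounds the relative Hamming distance via Chernoff; both give small pairwise overlap $\abs{\braket{\phi_x}{\phi_y}}\le\sqrt{\eps}$ in dimension $O(n/\eps)$ and hence the stated $\log(n/\eps)+4$ bound.
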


The proof uses the probabilistic method to analyze random linear codes. Nayak~\cite{Nay99} already used the same upper bound technique to show an upper bound on the bounded-error one-way quantum communication complexity of $\EQ_n$. They did not explicitly derive this error-dependence, but it follows immediately from their construction by plugging in codes with length $O(n/\eps)$ and relative distance $1/2-\sqrt{\eps}$ in~\cite[pp.16--17]{Nay99}.
We also show that this is nearly tight:

\begin{theorem}\label{thm: quant lower bound pure 1way}
There exists an absolute constant $c$ such that the following holds.
For all positive integers $n > 0$ and for all $\eps \in [1/2^n, 1/4]$,
\[
\Qpureoneway{\eps}(\EQ_n) \geq \log \rbra{\frac{n}{\eps}} - \log\log \rbra{\frac{1}{\eps}} - c.
\]
\end{theorem}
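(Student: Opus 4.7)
The plan is to convert any one-way pure-state quantum protocol for $\EQ_n$ into a low-rank Gram matrix that is close to the identity, and then to invoke Alon's theorem (Theorem~\ref{thm: alon}). Suppose Alice sends $q$ qubits, so her message lies in $\C^d$ with $d = 2^q$; write $\ket{\phi_x}$ for her message on input $x$, and $\{E_y, I - E_y\}$ for Bob's POVM on input $y$. Correctness gives $\bra{\phi_y}E_y\ket{\phi_y} \geq 1-\eps$ and $\bra{\phi_x}E_y\ket{\phi_x} \leq \eps$ for all $x \neq y$, so $\{E_y, I-E_y\}$ distinguishes the two pure states with success probability at least $1-\eps$. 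By the Holevo--Helstrom bound this forces $\norm{\ketbra{\phi_x}{\phi_x} - \ketbra{\phi_y}{\phi_y}}_1 \geq 2(1-2\eps)$, and the pure-state identity $\norm{\ketbra{\phi_x}{\phi_x} - \ketbra{\phi_y}{\phi_y}}_1 = 2\sqrt{1 - \abs{\braket{\phi_x}{\phi_y}}^2}$ then yields $\abs{\braket{\phi_x}{\phi_y}} \leq 2\sqrt{\eps(1-\eps)} \leq 2\sqrt{\eps}$ for all $x \neq y$.

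Next, I would form the Gram matrix $G \in \C^{N \times N}$ (with $N = 2^n$) defined by $G_{xy} = \braket{\phi_x}{\phi_y}$. It is PSD of complex rank at most $d$, has $G_{xx} = 1$, and satisfies $\abs{G_{xy}} \leq 2\sqrt{\eps}$ for $x \neq y$. To apply Alon's theorem, which is stated for real matrices, I would pass to the real part $R = \mathrm{Re}(G)$: writing $\ket{\phi_x} = a_x + ib_x$ with $a_x, b_x \in \R^d$ yields $R_{xy} = a_x^T a_y + b_x^T b_y$, exhibiting $R$ as a real Gram matrix of vectors in $\R^{2d}$ and hence of real rank at most $2d$. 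Applying Theorem~\ref{thm: alon} with error parameter $2\sqrt{\eps}$ (which lies in the admissible range $[1/2^{n/2}, 1/4]$ when $\eps \in [1/2^n, 1/64]$) gives $2d \geq \mathrm{rank}(R) \geq cn/\rbra{4\eps \log(1/(2\sqrt{\eps}))}$, hence $d = \Omega(n/(\eps\log(1/\eps)))$, and taking logarithms delivers the desired bound $q \geq \log(n/\eps) - \log\log(1/\eps) - O(1)$. The narrow residual range $\eps \in (1/64, 1/4]$, in which the target simplifies to $q \geq \log n - O(1)$, can be handled by rerunning the argument at the fixed error threshold $1/64$ after a cheap amplification step, or by a direct dimension argument.

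The conceptual crux, and the step I expect to be most delicate, is the pure-state bound $\abs{\braket{\phi_x}{\phi_y}} \leq 2\sqrt{\eps}$. The square-root saving here, over the natural $\eps$-approximation of the identity by the success-probability matrix $[\bra{\phi_x} E_y \ket{\phi_x}]_{x,y}$ (whose psd-rank is at most $d$), is exactly what upgrades the $\sqrt{n}/\eps$ lower bound of Equation~\eqref{eqn: quantum lower bound apxrank}, which applies to arbitrary quantum messages, to the $n/\eps$ bound claimed here. The remaining steps, namely the real/complex rank conversion and the quantitative invocation of Alon's theorem, are essentially routine.
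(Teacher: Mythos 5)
Your proposal is correct and takes essentially the same route as the paper: bound the pairwise overlaps of Alice's message states by $2\sqrt{\eps}$, realify to obtain a rank-$\leq 2d$ Gram matrix that $2\sqrt{\eps}$-approximates the $2^n\times 2^n$ identity, and invoke Theorem~\ref{thm: alon}; the only difference is that you derive the overlap bound via Holevo--Helstrom and the pure-state trace-distance formula, whereas the paper's Claim~\ref{claim: 1way quantum inner product upper bound} obtains the same $2\sqrt{\eps}$ bound by a direct projector/triangle-inequality computation. Your caveat about the range $\eps\in(1/64,1/4]$, where $2\sqrt{\eps}$ exits the admissible window of Theorem~\ref{thm: alon}, flags an issue the paper's own proof silently glosses over (though note that constant-factor amplification alone would only recover the bound up to a multiplicative constant in front of $\log n$, not the stated additive precision).
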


While the pure-state protocol of Theorem~\ref{thm: quant upper bound pure} has optimal dependence on~$\eps$ (up to the additive  $\log\log(1/\eps)$ term), it does not match the $n$-dependence of the lower bound of Equation~\eqref{eqn: quantum lower bound apxrank}; in fact, one-way pure-state protocols cannot match this (Theorem~\ref{thm: quant lower bound pure 1way}). However, if we allow one-way \emph{mixed-state} messages, then we can give a better upper bound and close the gap:

\begin{theorem}\label{thm: quant upper bound mixed}
For all positive integers $n > 0$ and for all $\eps \in [0, 1/2)$,
\[
\Q{\eps}(\EQ_n) \leq \log \rbra{\frac{\sqrt{n}}{\eps}} + 3.
\]
\end{theorem}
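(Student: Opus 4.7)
The plan is to exhibit an $\eps$-approximate psd-factorization of the $2^n \times 2^n$ identity matrix of inner dimension roughly $\sqrt{n}/\eps$ via the probabilistic method, and turn it into a one-way mixed-state quantum protocol. Fix $k = 4\lceil\sqrt{n}\,\rceil$ and $d = \lceil 2k/\eps\rceil$, so that $\log d \leq \log(\sqrt{n}/\eps) + 3$. For each $x \in \zone^n$, independently draw a Haar-random rank-$k$ orthogonal projector $\Pi_x$ on $\C^d$, and (by the probabilistic method, justified in the last paragraph) fix one such collection satisfying $\tr(\Pi_x \Pi_y) \leq \eps k$ for every pair $x \neq y$.

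Given these $\Pi_x$'s, the protocol is: Alice sends the mixed state $\rho_x := \Pi_x/k$, which is a valid density matrix on $\log d$ qubits, and Bob measures with the two-outcome POVM $\{\Pi_y,\, I - \Pi_y\}$, outputting $1$ iff the first outcome is observed. Correctness is immediate from the construction: when $x = y$, the acceptance probability is $\tr(\rho_x \Pi_x) = \tr(\Pi_x^2)/k = \tr(\Pi_x)/k = 1$ (no error on equal inputs), and when $x \neq y$ it is $\tr(\rho_x \Pi_y) = \tr(\Pi_x \Pi_y)/k \leq \eps$ by the promised property of the $\Pi_x$. The number of qubits communicated is $\log d \leq \log(\sqrt{n}/\eps) + 3$, as required. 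Note also that this gives a protocol without randomness, which in particular upper bounds $\Q{\eps}(\EQ_n)$.

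The main technical step is the concentration of measure underlying the probabilistic method. For any fixed pair $x \neq y$, direct Haar integration gives $\E[\tr(\Pi_x \Pi_y)] = k^2/d = \eps k/2$. To obtain the required tail bound $\Pr[\tr(\Pi_x \Pi_y) > \eps k] \leq \exp(-\Omega(k^2))$, I would realize $\Pi_y$ as the projection onto the column span of a $d \times k$ matrix $G$ with i.i.d.\ complex Gaussian entries, write $\tr(\Pi_x \Pi_y) = \tr((G^\dagger G)^{-1} G^\dagger \Pi_x G)$, and combine a chi-squared tail of the form $\Pr[\chi^2_{2k^2} > 4k^2] \leq \exp(-\Omega(k^2))$ for $\tr(G^\dagger \Pi_x G)$ with a standard Wishart concentration estimate placing $G^\dagger G$ close to $d I_k$. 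Since $k^2 = \Theta(n)$, a union bound over the at most $2^{2n}$ pairs then succeeds. The main obstacle will be tracking the constants through this concentration estimate tightly enough to land on the advertised additive $+3$ rather than a slightly worse constant; the structural part of the argument is otherwise robust to such losses.
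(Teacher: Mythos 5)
Your protocol and parameters are exactly those of the paper's proof (Alice sends $\rho_x=\Pi_x/k$ for a family of rank-$\Theta(\sqrt{n})$ projectors in dimension $d=\Theta(\sqrt{n}/\eps)$, Bob measures $\{\Pi_y,I-\Pi_y\}$), and the correctness and cost accounting are fine. The gap is in the probabilistic existence step, and it is more than a matter of ``tracking constants.'' Your plan is to bound $\tr(\Pi_x\Pi_y)=\tr\bigl((G^\dagger G)^{-1}G^\dagger \Pi_x G\bigr)$ by controlling numerator and denominator separately, using ``standard Wishart concentration placing $G^\dagger G$ close to $dI_k$'' with per-pair failure probability small enough for a union bound over $\leq 2^{2n}$ pairs, i.e.\ failure probability about $2^{-2n}=\exp(-\Omega(k^2))$. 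But the Wishart half of this cannot be that strong: $\lambda_{\min}(G^\dagger G)\leq \|g_1\|^2$ (the squared norm of the first column of $G$), and $\Pr[\|g_1\|^2\leq d/2]=\exp(-\Theta(d))$, so the probability that $G^\dagger G$ fails to be within a constant factor of $dI_k$ is at least $\exp(-Cd)$ for an absolute constant $C$. Since $d=\Theta(\sqrt{n}/\eps)$, this is $\exp(-\Theta(\sqrt{n}/\eps))$, which for constant (or even mildly small) $\eps$ is enormously larger than $2^{-2n}$; already the union bound over the $2^n$ choices of $y$ fails. Your route could only be pushed through when $\eps=O(1/\sqrt{n})$, i.e.\ when $d=\Omega(n)$, which excludes the main regime of interest (e.g.\ $\eps=1/3$).

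The needed tail bound $\Pr[\tr(\Pi_x\Pi_y)\geq 2k^2/d]\leq \exp(-\Omega(k^2))$ \emph{with no dependence on $d$} is true, but its proof goes through concentration of measure on the unitary group (L\'evy's lemma applied to $U\mapsto \tr(P\,UQ_0U^\dagger)$), not through a separate numerator/denominator Gaussian estimate; this is exactly Lemma~III.5 of Hayden--Leung--Winter, which the paper invokes (Claim~\ref{claim: random projector good}) with $\delta=1$, $r=\sqrt{10n}$, $d=2r/\eps$ to get per-pair failure probability $<2^{-2n}$ and then union-bounds. So to complete your argument you should replace the chi-squared-plus-Wishart step by that lemma (or reprove its Lipschitz/L\'evy argument); everything else in your write-up then matches the paper's proof.
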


An upper bound of $\log \sqrt{n} + O(1)$ was already proved by Winter~\cite{winter:ident} for the case of constant~$\eps$, and here we obtain the correct dependence also for subconstant~$\eps$.
Our proof is again probabilistic, using known concentration properties of overlaps of random projectors to allow us to show the existence of appropriate mixed-state messages for Alice and appropriate measurements for Bob.
Theorems~\ref{thm: qeps apxpsd rank lower bound} and~\ref{thm: quant upper bound mixed} also imply upper bounds on the $\eps$-approximate psd-rank of the Identity matrix.

\begin{corollary}\label{cor: our apx psd rank identity upper bound}
Let $n > 0$ be an integer, and let $I$ denote the $2^n \times 2^n$ Identity matrix. Then for all $\eps \in [0, 1/2)$,
\[
\rankpsd{\eps}(I) \leq \frac{4\sqrt{n}}{\eps}.
\]
\end{corollary}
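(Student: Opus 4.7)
The plan is to derive this bound as an immediate consequence of the two theorems cited just before its statement, namely Theorem~\ref{thm: qeps apxpsd rank lower bound} and Theorem~\ref{thm: quant upper bound mixed}. The $2^n \times 2^n$ Identity matrix is the communication matrix of $\EQ_n$, so Theorem~\ref{thm: qeps apxpsd rank lower bound} applied with $F = \EQ_n$ gives $\log \rankpsd{\eps}(I) + 1 \leq \Q{\eps}(\EQ_n)$. Combining with the upper bound $\Q{\eps}(\EQ_n) \leq \log(\sqrt{n}/\eps) + 3$ from Theorem~\ref{thm: quant upper bound mixed} and exponentiating yields
\[
\rankpsd{\eps}(I) \;\leq\; 2^{\Q{\eps}(\EQ_n) - 1} \;\leq\; 2^{\log(\sqrt{n}/\eps) + 2} \;=\; \frac{4\sqrt{n}}{\eps},
\]
which is the claimed bound.

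There is no real obstacle here; all the substantive content is already packaged inside Theorem~\ref{thm: quant upper bound mixed}, whose proof is the probabilistic construction of Alice's mixed-state messages and Bob's measurements via concentration of overlaps of random projectors. One minor caveat is that $\Q{\eps}(\EQ_n)$ is integer-valued while $\log(\sqrt{n}/\eps) + 3$ is typically not, so to be fully formal one should either absorb that rounding into the additive constant or take the cleaner alternative route: skip the detour through Theorem~\ref{thm: qeps apxpsd rank lower bound} and read a psd-factorization of an $\eps$-approximation of $I$ directly off the construction in Theorem~\ref{thm: quant upper bound mixed}. In that direct route Alice's density matrices $\rho_x$ play the role of the $A_x$, Bob's ``$x = y$'' measurement operators $M_y$ play the role of the $B_y$, and the Born rule $\tr(\rho_x M_y)$ recovers an $\eps$-approximation of $I_{xy}$ with factorization dimension equal to the dimension of Alice's message Hilbert space, which is at most $4\sqrt{n}/\eps$ by construction.
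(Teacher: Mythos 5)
Your main derivation is exactly the paper's intended proof: the corollary is stated as an immediate consequence of Theorem~\ref{thm: qeps apxpsd rank lower bound} applied to $F = \EQ_n$ (whose communication matrix is $I$) together with Theorem~\ref{thm: quant upper bound mixed}, and your arithmetic $\rankpsd{\eps}(I) \leq 2^{\Q{\eps}(\EQ_n)-1} \leq 2^{\log(\sqrt{n}/\eps)+2} = 4\sqrt{n}/\eps$ is correct. One minor caveat about your alternative ``direct'' route: the message dimension in the construction behind Theorem~\ref{thm: quant upper bound mixed} is $d = 2\sqrt{10n}/\eps \approx 6.3\sqrt{n}/\eps$, which exceeds $4\sqrt{n}/\eps$, so reading the psd-factorization straight off that construction gives a slightly worse constant than the stated bound (the route through Theorem~\ref{thm: qeps apxpsd rank lower bound} is what recovers the factor $4$).
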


As noted by Lee, Wei and de Wolf~\cite[Theorem~17]{LWW:psdrank}, Alon's approximate rank lower bound (Theorem~\ref{thm: alon}) almost immediately gives a lower bound of   
$\rankpsd{\eps}(I)=\Omega\left(\frac{\sqrt{n}}{\eps\sqrt{\log(1/\eps)}}\right)$. This shows that our upper bound in Corollary~\ref{cor: our apx psd rank identity upper bound} is tight up to a multiplicative $O(\sqrt{\log(1/\eps)})$ factor.

We may also consider the amount of entanglement needed to compute $\EQ_n$ in the entanglement-assisted setting, where Alice and Bob send classical bits but share an arbitrary input-independent state $\ket{\psi}$ at the start of the protocol, for instance many EPR-pairs. Since entanglement may be used to generate shared randomness by measuring, the classical public-coin protocol yields an entanglement-assisted protocol using $\ceil{\log 1/\eps} + 1$ bits of communication and $\ceil{\log n/\epsilon} + O(1)$ shared EPR-pairs. We improve on the amount of shared entanglement that's needed by showing:

\begin{theorem}\label{thm: quant upper bound entangled}
For all positive integers $n > 0$ and for all $\eps \in (0, 1/2)$, there exists a one-way protocol for $\EQ_n$ with error probability at most $\epsilon$ using $\ceil{\log 1/\eps} + 1$ bits of communication and $\ceil{\log \sqrt{n}/\eps} + 4$ shared EPR-pairs.
\end{theorem}
We do not know if the amount of classical communication used by our protocol to achieve error probability $\epsilon$ is essentially optimal. Nor do we know if the number of EPR-pairs can be reduced further given this amount of communication.

\section{Preliminaries}

All logarithms in this paper are taken to base 2. We use $\exp(x)$ to denote $e^{x}$, where $e$ denotes Euler's number. 
For strings $x, y \in \zone^n$, define their Hamming distance by $d(x, y) := |\cbra{i \in [n] : x_i \neq y_i}|$.
For an event $X$, let $I(X)\in\zone$ denote the \emph{indicator} of $X$, which is~1 iff $X$ occurs.

\begin{definition}[Linear code]
For integers $N \geq n$, a linear code is a linear function $C : \zone^{n} \to \zone^N$.
\end{definition}
One may view a linear code as an $N \times n$ matrix $M$ over $\mathbb{F}_2$; an input $x \in \zone^n$ is mapped to $N$-bit codeword $Mx$ (where the matrix product is taken over $\F_2$).
Choosing a random linear code corresponds to choosing an $M$ with uniformly random binary entries.

We use the following well-known multiplicative form of the Chernoff bound~\cite[Theorem~4.4]{MU05}.
\begin{lemma}\label{lem: chernoff}
Let $Z_1, \dots, Z_n$ be independent random variables taking values in $\zone$. Let $Z = \sum_{i = 1}^n Z_i$, and let $\mu = \E[Z]$. Then for all $\delta \in [0, 1]$,
\begin{align*}
\Pr[Z \geq (1+\delta)\mu] & \leq \exp(-\delta^2\mu/3),\\
\Pr[Z \leq (1-\delta)\mu] & \leq \exp(-\delta^2\mu/2).
\end{align*}
\end{lemma}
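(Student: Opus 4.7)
The plan is to apply the classical Chernoff--Cramér exponential-moment method: for a real parameter $t>0$, invoke Markov's inequality on the nonnegative random variable $e^{tZ}$, use independence of the $Z_i$ to factorize the moment generating function, optimize over $t$, and finish with a short analytic inequality that converts the optimized expression into the stated $\exp(-\delta^2\mu/c)$ form. The two tails are handled symmetrically, with $t>0$ driving the upper tail and the same recipe applied to $-Z$ (equivalently $t<0$) driving the lower tail.

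Concretely, for the upper tail I would first write, for any $t>0$,
\[
\Pr[Z\geq (1+\delta)\mu]=\Pr\!\bra{e^{tZ}\geq e^{t(1+\delta)\mu}}\leq e^{-t(1+\delta)\mu}\,\E[e^{tZ}].
\]
By independence, $\E[e^{tZ}]=\prod_{i=1}^n \E[e^{tZ_i}]$, and writing $p_i=\Pr[Z_i=1]$ gives $\E[e^{tZ_i}]=1+p_i(e^t-1)\leq \exp(p_i(e^t-1))$, using $1+x\leq e^x$. Multiplying the factors and noting $\sum_i p_i=\mu$ yields $\E[e^{tZ}]\leq \exp(\mu(e^t-1))$. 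Minimizing the resulting bound in $t$ picks out $t=\ln(1+\delta)$, at which point
\[
\Pr[Z\geq (1+\delta)\mu]\leq \rbra{\frac{e^\delta}{(1+\delta)^{1+\delta}}}^{\mu}.
\]

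To reach the clean exponential form, I would invoke the analytic inequality $(1+\delta)\ln(1+\delta)\geq \delta+\delta^2/3$ on $[0,1]$, which is equivalent to the stated bound $\exp(-\delta^2\mu/3)$ after taking logarithms. I would verify it by checking that the difference $g(\delta)=(1+\delta)\ln(1+\delta)-\delta-\delta^2/3$ satisfies $g(0)=0$ and $g'(\delta)=\ln(1+\delta)-2\delta/3\geq 0$ for $\delta\in[0,1]$, which in turn follows from $g'(0)=0$, $g'(1)=\ln 2-2/3>0$, and concavity of $\ln(1+\delta)-2\delta/3$ ruling out a sign change. The lower-tail argument is completely parallel: apply the same recipe with $t=-\ln(1-\delta)>0$ to obtain a bound of $(e^{-\delta}/(1-\delta)^{1-\delta})^\mu$, and convert it using the companion inequality $(1-\delta)\ln(1-\delta)\geq -\delta+\delta^2/2$ on $[0,1]$, which reduces by the same endpoint-and-derivative check to $(1-\delta)e^\delta\leq 1$.

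The main obstacle is not conceptual but analytic: the two Taylor-type inequalities on $[0,1]$ are exactly what give the sharp constants $3$ and $2$ in the exponents. Any cruder estimate (e.g., bounding $(1+\delta)^{1+\delta}$ by AM--GM, or replacing $\ln(1+\delta)$ by its first-order Taylor approximation) would weaken these constants, so I would make sure to carry out the endpoint-plus-derivative verifications carefully rather than try to shortcut past them.
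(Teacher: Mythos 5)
Your proof is correct: the exponential-moment (Chernoff--Cram\'er) argument, the optimization at $t=\ln(1+\delta)$ resp.\ $t=-\ln(1-\delta)$, and the two calculus inequalities $(1+\delta)\ln(1+\delta)\geq \delta+\delta^2/3$ and $(1-\delta)\ln(1-\delta)\geq -\delta+\delta^2/2$ on $[0,1]$ are exactly what yield the constants $3$ and $2$, and your endpoint-plus-derivative verifications of these inequalities go through. The paper does not prove this lemma at all --- it simply cites it as Theorem~4.4 of Mitzenmacher and Upfal --- and your argument is precisely the standard proof given there, so there is nothing to reconcile beyond noting the trivial boundary cases $\delta=0$ and $\delta=1$ (handled by a limit or directly).
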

We refer the reader to~\cite{KN97,RY:cc} for the basics of classical communication complexity, to \cite{nielsen&chuang:qc} for quantum computing,  and to~\cite{Wol02} for an introduction to quantum communication complexity.
\section{An improved form of Newman's theorem}

\begin{proof}[Proof of Theorem~\ref{thm: improved newman new}]
Let $\Pi$ be a public-coin protocol that computes $F$ with error $\eps$. Assume without loss of generality that all the random coins are tossed at the beginning of the protocol. That is, for every $x, y \in \zone^n$,
\begin{equation}\label{eqn: pi error probability}
\Pr_r[\Pi(x, y, r) \neq F(x, y)] \leq \eps.
\end{equation}
Set 
\begin{equation}\label{eqn: choice of B}
B = \frac{6n}{\delta^2 \eps}
\end{equation}
and independently choose random strings $r_1, \dots, r_B$ according to the same distribution as used by~$\Pi$.
For two strings $x, y \in \zone^n$ and an index $j \in [B]$, let $I_{j, x, y}$ denote the indicator event of $r_j$ being a ``bad random string'' for $x, y$: 
\begin{equation}\label{eqn: definition ijxy new}
I_{j, x, y} := 
\begin{cases}
1 & \Pi(x, y, r_j) \neq f(x, y)\\
0 & \text{otherwise.}
\end{cases}
\end{equation}
Fix two arbitrary strings $x, y \in \zone^n$. Equation~\eqref{eqn: pi error probability} implies $\Pr_{r_1, \dots, r_B, j \in [B]}[I_{j, x, y} = 1] \leq \eps$. By linearity of expectation and our choice of $B$ in Equation~\eqref{eqn: choice of B},
\[
\E_{r_1, \dots, r_B}\left[\sum_{j \in [B]} I_{j, x, y}\right] \leq B\eps = \frac{6n}{\delta^2}.
\]
We now give an upper bound on $\Pr_{r_1, \dots, r_B}\left[\sum_{j \in [B]} I_{j, x, y} \geq B\eps(1+\delta) \right]$. Assume without loss of generality that $\Pr_{r_1, \dots, r_B, j \in [B]}[I_{j, x, y} = 1] = \eps$, and hence $\E_{r_1, \dots, r_B}\left[\sum_{j \in [B]} I_{j, x, y}\right] = B\eps$ (since the desired probability could only be smaller otherwise).
By a Chernoff bound (Lemma~\ref{lem: chernoff}),
\[
\Pr_{r_1, \dots, r_B}\left[\sum_{j \in [B]} I_{j, x, y} \geq B\eps(1+\delta) \right] \leq \exp\rbra{-\frac{\delta^2 \cdot 6n}{3 \delta^2}} = \exp(-2n) < 2^{-2n}.
\]
By a union bound over all $x, y \in \zone^n$,
\begin{align*}
\Pr_{r_1, \dots, r_B}\left[\sum_{j \in [B]} I_{j, x, y} \geq B\eps(1+\delta) ~\text{for some}~x, y \in \zone^n\right] & \leq \sum_{x, y \in \zone^n} \Pr_{r_1, \dots, r_B}\left[\sum_{j \in [B]} I_{j, x, y} \geq B\eps(1+\delta) \right]\\
& <2^{2n}\cdot 2^{-2n} = 1.
\end{align*}
Hence there exists a choice of $r_1, \dots, r_B$ such that the following holds for all $x, y \in \zone^n$:
\begin{equation}\label{eqn: union bound all errors small new}
\sum_{j \in [B]}I_{j, x, y} < B\eps(1+\delta).
\end{equation}
Fixing this choice of $r_1, \dots, r_B$, Protocol~\ref{algo: private coin protocol new} gives a private-coin protocol for $F$.

\smallskip

\begin{protocol}\label{algo: private coin protocol new}
\SetAlgoLined
    \begin{enumerate}
        \item Alice samples $j \in [B]$ uniformly at random, and sends it to Bob.
        \item Alice and Bob perform the public-coin protocol $\Pi$ assuming $r_j$ was the public random string.
    \end{enumerate}
\caption{A private-coin protocol for $F$}
\end{protocol}
To show the correctness of this protocol, our choice of $B$ (Equation~\eqref{eqn: choice of B}) and Equations~\eqref{eqn: definition ijxy new} and~\eqref{eqn: union bound all errors small new} imply that for all $x, y \in \zone^n$,
\begin{align*}
    \Pr_{j \in [B]}[\Pi(x, y, r_j) \neq f(x, y)] & < \frac{B\eps(1+\delta)}{B} = \eps(1+\delta).
\end{align*}
Hence the protocol has error probability less than $\eps(1+\delta)$. The cost of the first step of the protocol is $\log B$, and the cost of the second step is at most $\Rpub{\eps}(F)$. Thus, we have,
\[
\Rpri{\eps(1+\delta)}(F) \leq \Rpub{\eps}(F) + \log B = \Rpub{\eps}(F) + \log\frac{6n}{\delta^2\eps} = \Rpub{\eps}(F) + \log\frac{n}{\eps} + \log\frac{6}{\delta^2}.
\]
Note that if $\Pi$ was a one-way protocol, then Protocol~\ref{algo: private coin protocol new} is a one-way private-coin protocol.
\end{proof}

\section{Communication complexity upper bounds}

In this section we show randomized and quantum communication upper bounds for Equality.

\subsection{Randomized upper bound}\label{sec: randomized communication}

As an application of Theorem~\ref{thm: improved newman new}, we recover an optimal small-error private-coin communication complexity upper bound for $\EQ_n$ from a naive public-coin protocol of cost $\log(2/\eps)$ and error $\eps/2$:

\begin{equation}\label{eqn: equality optimal randomized upper bound}
\Rpri{\eps}(\EQ_n) \leq \log\rbra{\frac{2}{\eps}} + \log\rbra{\frac{n}{\eps}} + 3 = \log\rbra{\frac{n}{\eps^2}} + 4.
\end{equation}
This proves Theorem~\ref{thm: rand upper bound}.
In contrast, Newman's theorem (Theorem~\ref{thm: newman additive}) would only give an upper bound of
\[
\Rpri{\eps}(\EQ_n) \leq \log\rbra{\frac{2}{\eps}} + \log\rbra{\frac{n}{\eps^2}} + O(1) = \log\rbra{\frac{n}{\eps^3}} + O(1).
\]
In particular, for $\eps=1/n$ we improve the upper bound from $4\log n+O(1)$ to $3\log n+O(1)$, which turns out to be essentially optimal.

\subsection{Quantum upper bound with only pure states}

We require the following property of random linear codes.
\begin{claim}\label{claim: quant random code good}
Let $n$ be a positive integer and let $\delta > 0$. Let $x \neq y \in \zone^n$ be two arbitrary but fixed strings. Let $N = 4n/\delta^2$. Let $C : \zone^n \to \zone^N$ be a random linear code. Then
\[
\Pr_C\left[\frac{d(C(x),  C(y))}{N} \notin \left[\frac{1}{2} - \delta, \frac{1}{2} + \delta\right]\right] < 2^{-2n}.
\]
\end{claim}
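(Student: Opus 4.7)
The plan is to exploit the linearity of $C$ to reduce the claim to a Chernoff bound for i.i.d.\ fair coin flips. Realizing $C$ as multiplication by a uniformly random matrix $M \in \F_2^{N \times n}$, linearity gives $C(x) \oplus C(y) = M(x \oplus y)$, so $d(C(x), C(y))$ equals the Hamming weight of $Mz$ where $z := x \oplus y$. Crucially, $z$ is a \emph{fixed, nonzero} vector since $x \neq y$, and this is what makes the subsequent argument independent of the particular pair $(x, y)$.

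Next I would analyze the distribution of each coordinate of $Mz$. For any fixed nonzero $z \in \F_2^n$ and a uniformly random row $r \in \F_2^n$, the $\F_2$-inner product $\inp{r}{z}$ is a uniformly random bit. Since the $N$ rows of $M$ are drawn independently and uniformly, the entries $(Mz)_1, \ldots, (Mz)_N$ are i.i.d.\ Bernoulli$(1/2)$. Hence $d(C(x), C(y)) = \sum_{i=1}^{N} Z_i$ for i.i.d.\ Bernoulli$(1/2)$ variables $Z_i$, with mean $\mu = N/2 = 2n/\delta^2$.

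Finally I would apply the multiplicative Chernoff bound of Lemma~\ref{lem: chernoff}: an absolute deviation of $\delta N$ from the mean is the same as a relative deviation $\delta' := 2\delta$ from $\mu$. We may assume $\delta \leq 1/2$, since otherwise $[\tfrac{1}{2} - \delta, \tfrac{1}{2} + \delta] \supseteq [0, 1]$ and the claimed probability is trivially zero. Plugging $\delta' = 2\delta \in [0, 1]$ into Lemma~\ref{lem: chernoff} gives $\Pr[Z \geq (1 + 2\delta)\mu] \leq \exp(-(2\delta)^2 \mu / 3) = \exp(-8n/3)$ and $\Pr[Z \leq (1 - 2\delta)\mu] \leq \exp(-4n)$, so a union bound yields total deviation probability at most $2\exp(-8n/3)$. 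The only bookkeeping step is to verify $2\exp(-8n/3) < 2^{-2n}$, which reduces to the elementary inequality $8/(3 \ln 2) > 2 + 1/n$ and holds for every $n \geq 1$; I don't anticipate any real obstacle beyond this.
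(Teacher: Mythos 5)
Your proposal is correct and follows essentially the same route as the paper: view the coordinates of $C(x)\oplus C(y)$ as i.i.d.\ fair coin flips with mean $N/2$ and apply the multiplicative Chernoff bound (Lemma~\ref{lem: chernoff}) with relative deviation $2\delta$, yielding $2\exp(-8n/3) < 2^{-2n}$. Your extra care (justifying uniformity and independence via $Mz$ for the fixed nonzero $z = x\oplus y$, and dispatching $\delta > 1/2$ so that $2\delta \in [0,1]$) only makes explicit details the paper leaves implicit.
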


\begin{proof}[Proof of Claim~\ref{claim: quant random code good}]
For each $i \in [N]$, the random variable $Z_i := I[C(x)_i = C(y)_i]$ equals 1 with probability $1/2$ and $0$ with probability $1/2$. Further, $Z_i$ and $Z_j$ are independent for all $i \neq j \in [N]$. Define $Z = \sum_{i = 1}^N Z_i = d(C(x), C(y))$. We have $\E[Z] = N/2$. By a Chernoff bound (Lemma~\ref{lem: chernoff}),
\[
\Pr_C\left[\abs{\frac{d(C(x),  C(y))}{N} - \frac{1}{2}} \geq \delta\right] = \Pr_C\left[\abs{Z - \frac{N}{2}} \geq 2\delta \cdot \frac{N}{2}\right] \leq 2\exp\rbra{-4\delta^2N/6} < 2^{-2n},
\]
where the last inequality holds by our choice of $N$.
\end{proof}

By a union bound over all $x, y \in \zone^n$, Claim~\ref{claim: quant random code good} implies the following corollary.

\begin{corollary}\label{cor: quant good code}
Let $n$ be a positive integer, let $\delta > 0$ and let $N = 4n/\delta^2$. Then there exists a linear code $C : \zone^n \to \zone^N$ such that for all $x \neq y \in \zone^n$,
\[
\frac{d(C(x), C(y))}{N} \in \left[\frac{1}{2} - \delta, \frac{1}{2} + \delta\right].
\]
\end{corollary}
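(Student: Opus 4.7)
The plan is to apply Claim~\ref{claim: quant random code good} in conjunction with the probabilistic method: I would draw $C : \zone^n \to \zone^N$ as a uniformly random linear code (equivalently, an $N \times n$ matrix over $\F_2$ with i.i.d.\ uniform entries) and argue that with strictly positive probability the relative-distance condition holds simultaneously for \emph{every} unequal pair $(x,y)$. Existence of a deterministic code with the claimed property then follows immediately.

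Concretely, the first step is to fix an arbitrary ordered pair $x \neq y \in \zone^n$ and quote Claim~\ref{claim: quant random code good}, which bounds the probability that $d(C(x),C(y))/N$ lies outside $[1/2 - \delta, 1/2 + \delta]$ by $2^{-2n}$. The second step is a union bound over all such pairs. Since there are exactly $2^n(2^n - 1) < 2^{2n}$ ordered pairs with $x \neq y$, the union bound gives
\[
\Pr_C\left[\exists\, x \neq y \in \zone^n : \frac{d(C(x), C(y))}{N} \notin \left[\frac{1}{2} - \delta, \frac{1}{2} + \delta\right]\right] < 2^{2n} \cdot 2^{-2n} = 1.
\]
Hence the complementary event occurs with positive probability, and in particular there exists at least one linear code $C$ for which every pair $x \neq y$ satisfies the required relative-distance bound.

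There is no real obstacle here: the entire argument is a textbook application of the probabilistic method, and the only bookkeeping issue is that one needs the strict inequality $< 2^{-2n}$ already supplied by Claim~\ref{claim: quant random code good} (or, equivalently, the observation that the number of relevant pairs is $2^n(2^n-1)$ rather than $2^{2n}$) in order to conclude that the failure probability is strictly less than~$1$. Both routes are trivially available, so the corollary follows with no additional work.
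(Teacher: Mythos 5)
Your argument is correct and is exactly the paper's proof: the paper derives the corollary from Claim~\ref{claim: quant random code good} by a union bound over all pairs $x \neq y$, using the strict bound $< 2^{-2n}$ to conclude the failure probability is below $1$. Nothing further is needed.
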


We now prove Theorem~\ref{thm: quant upper bound pure}.
\begin{proof}[Proof of Theorem~\ref{thm: quant upper bound pure}]
Set $\delta = \sqrt{\eps}/2$. Let $N = 4n/\delta^2 = 16n/\eps$ and let $C : \zone^n \to \zone^{16n/\eps}$ be the code obtained from Corollary~\ref{cor: quant good code}. The following is a protocol for $\EQ_n$.
\begin{enumerate}
    \item Alice, on input $x\in \zone^n$ prepares state $\ket{\phi_x} :=  \frac{1}{\sqrt{N}} \sum_{i \in [N]} (-1)^{C(x)_i} \ket{i}$, and sends Bob $\ket{\phi_x}$.
    \item Define $\ket{\phi_y} := \frac{1}{\sqrt{N}} \sum_{i \in [N]} (-1)^{C(y)_i} \ket{i}$. Bob measures with respect to the projectors $\ket{\phi_y}\bra{\phi_y}$ and $I - \ket{\phi_y}\bra{\phi_y}$, and outputs 1 on observing the first measurement outcome, and 0 otherwise.
\end{enumerate}
This protocol succeeds with probability 1 when $x = y$. The only error arises when $x \neq y$ and Bob observes the first measurement outcome.
Thus, the error probability of this protocol equals 
\begin{align*}
    \max_{x \neq y \in \zone^n}|\braket{\phi_x}{\phi_y}|^2 &= \max_{x \neq y \in \zone^n} \left(\frac{1}{N} \sum_{i \in [N]} (-1)^{C(x)_i + C(y)_i}\right)^2\\
    & = \max_{x \neq y \in \zone^n} \left(1 -\frac{2d(C(x), C(y))}{N}\right)^2\\
    & \leq 4\delta^2 = \eps,
\end{align*}
where the last inequality follows from Corollary~\ref{cor: quant good code} and the last equality follows from our choice of~$\delta$.
The number of qubits sent from Alice to Bob is $\log N = \log(16n/\eps) = \log(n/\eps) + 4$.
\end{proof}
We show in Section~\ref{sec: 1way quantupure lower bound} that the protocol in the previous proof is nearly optimal if one restricts to one-way communication with only pure states.

\subsection{Quantum upper bound with mixed states}\label{sec:upperboundmixed}

In the last section we gave a $\log(n/\eps) + O(1)$ quantum upper bound on the $\eps$-error communication complexity of $\EQ_n$, where Alice was only allowed to send a pure state to Bob.  In this section we show that allowing Alice to send a \emph{mixed} state to Bob gives a communication upper bound that is better by a factor of~2 (which is in fact optimal). An upper bound of $\log \sqrt{n} + O(1)$ was already proved by Winter~\cite{winter:ident} for the case of constant~$\eps$, but here we obtain the correct dependence also for subconstant~$\eps$. Our protocol is based on concentration properties of overlaps of random projectors.
 
Consider two rank-$r$ projectors $P$ and $Q$ acting on $\C^d$.  The largest possible inner product~$\tr(PQ)$ between them is~$r$, which occurs iff $P=Q$.  However, when one or both of the projectors are Haar-random, then we expect their inner product to be much smaller, namely only $r^2/d$. This is because if we take the spectral decompositions $P=\sum_{i=1}^r\ketbra{u_i}{u_i}$ and $Q=\sum_{j=1}^r\ketbra{v_j}{v_j}$, then
$$
\tr(PQ)=\sum_{i,j=1}^r|\inp{u_i}{v_j}|^2,
$$
and the expected squared inner product between a random $d$-dimensional unit vector~$u_i$ and any fixed unit vector~$v_j$, is $1/d$.
Hayden, Leung and Winter~\cite[Lemma III.5]{HLW06} showed that this inner product is very tightly concentrated around its expectation.
\begin{claim}[{\cite[Lemma III.5]{HLW06}}]\label{claim: random projector good}
Let $P$ and $Q$ be rank-$r$ projectors on $\C^d$, where $P$ is random\footnote{More precisely, $P$ is a projection onto a uniformly chosen $r$-dimensional subspace from all $r$-dimensional subspaces of $\C^d$. We do not elaborate more on this here since it is not relevant for us.} and $Q$ is fixed. Let $\delta \in [0, 1]$.
Then
\[
\Pr\left[\tr(PQ)\geq \frac{(1+\delta)r^2}{d}\right]\leq \exp\left(\frac{-r^2\delta^2}{5}\right) < 2^{-r^2\delta^2/5}.
\]
\end{claim}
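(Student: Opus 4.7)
My plan is to apply measure concentration on the unitary group. Parameterize $P=UP_0U^\dagger$ with $P_0$ a fixed rank-$r$ projector (say, onto the span of the first $r$ basis vectors of $\C^d$) and $U$ Haar-random on $U(d)$. The quantity $\tr(PQ)$ then becomes the scalar function $f(U)=\tr(UP_0U^\dagger Q)$. The first step is to compute its mean: by unitary invariance, $\E_U[UP_0U^\dagger]=(\tr(P_0)/d)\,I=(r/d)I$, so $\E_U[f(U)]=(r/d)\tr(Q)=r^2/d$, which matches the centering on the right-hand side of the claim.

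Next, I would invoke a Levy-type concentration inequality on $U(d)$: for any $K$-Lipschitz function $g:U(d)\to\R$ (with respect to the Hilbert--Schmidt metric),
\[
\Pr_U\bigl[g(U)-\E[g]\geq t\bigr]\leq \exp\!\left(-c\,d\,t^2/K^2\right)
\]
for an absolute constant $c$. Via the telescoping identity
\[
f(U)-f(V)=\tr((U-V)P_0U^\dagger Q)+\tr(VP_0(U-V)^\dagger Q)
\]
and the bound $|\tr(AB)|\leq \|A\|_2\|B\|_2$, one sees that $f$ is Hilbert--Schmidt Lipschitz with constant of order $\sqrt{r}$, since $P_0$ and $Q$ are rank-$r$ projectors and so have Hilbert--Schmidt norm $\sqrt{r}$ and operator norm $1$. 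Plugging $t=\delta r^2/d$ and $K=O(\sqrt{r})$ into the inequality above gives an exponent of order $r^3\delta^2/d^2$.

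The hard step is tightening this to $\Omega(r^2\delta^2)$, which is what the claim actually demands: the routine Lipschitz estimate is too loose once $d\gg r$. To recover the sharp exponent, I would move from $U(d)$ to the Grassmannian of $r$-dimensional subspaces of $\C^d$, whose ``effective dimension'' for $f$ is $r(d-r)$ rather than $d^2$, and apply the concentration inequality tailored to that manifold; the Lipschitz constant there is much smaller because the redundant direction of changing $U$ within the stabilizer of $P_0$ contributes nothing to $f$. This is exactly the route carried out by Hayden, Leung and Winter, so in practice I would cite~\cite[Lemma III.5]{HLW06} and proceed, since reproducing the geometric argument in full would add length without helping the communication-complexity application in Section~\ref{sec:upperboundmixed}.
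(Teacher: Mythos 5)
The paper does not prove Claim~\ref{claim: random projector good} at all: it is imported verbatim from Hayden, Leung and Winter~\cite[Lemma III.5]{HLW06}, so your final move of simply citing that lemma is exactly the paper's approach. Your preliminary sketch is sensible motivation (the mean computation $\E[\tr(PQ)]=r^2/d$ via unitary invariance is correct, and you rightly observe that naive Levy-type concentration on $U(d)$ only gives an exponent of order $r^3\delta^2/d^2$, which is too weak), but since the Grassmannian tightening is only gestured at, it should not be mistaken for a self-contained derivation of the stated exponent $r^2\delta^2/5$ --- the citation is doing the work, just as it does in the paper.
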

The following corollary then follows by setting parameters suitably.

\begin{corollary}\label{cor: exists good projectors}
For every integer $n > 0$ and all $\eps \in [0, 1/2)$, there exists a set $\cbra{P_x : x \in \zone^n}$ of $2^n$ rank-$r$ projectors on $\C^d$, with $r = \sqrt{10n}$ and $d = 2r/\eps$, such that $\tr(P_x P_y) < \eps r$ for all $x \neq y \in \zone^n$.
\end{corollary}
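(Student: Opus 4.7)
The plan is to apply the probabilistic method: sample each $P_x$ independently and Haar-randomly (i.e., $P_x$ projects onto a uniformly chosen $r$-dimensional subspace of $\C^d$), and show via Claim~\ref{claim: random projector good} and a union bound that with positive probability the desired inequality holds for every pair $x\neq y$.

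First, I would check the parameters. With $r=\sqrt{10n}$ and $d=2r/\eps$, the ``typical'' value of $\tr(P_x P_y)$ for a fixed pair of random rank-$r$ projectors is $r^2/d = r\eps/2$. The target $\eps r$ is exactly twice this expectation, so I would invoke Claim~\ref{claim: random projector good} with $\delta=1$. Fix any pair $x\neq y$. Conditioning on $P_y$ (which is legitimate because $P_x$ and $P_y$ are independent and $P_x$ remains Haar-random given $P_y$), the claim gives
\[
\Pr\!\bigl[\tr(P_x P_y)\geq \eps r\bigr]=\Pr\!\left[\tr(P_x P_y)\geq \frac{2r^2}{d}\right]< 2^{-r^2/5}=2^{-2n},
\]
using $r^2=10n$.

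Next I would take a union bound over the at most $2^{2n}$ ordered pairs $(x,y)$ with $x\neq y$:
\[
\Pr\!\bigl[\exists\, x\neq y:\ \tr(P_x P_y)\geq \eps r\bigr]< 2^{2n}\cdot 2^{-2n}=1.
\]
Hence with positive probability the random assignment $\{P_x:x\in\zone^n\}$ satisfies $\tr(P_x P_y)<\eps r$ for every $x\neq y$, and in particular a valid deterministic choice exists.

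There is no real obstacle here; the only minor loose ends are integrality of $r$ and $d$, which I would handle by replacing them with $\lceil r\rceil$ and $\lceil d\rceil$ (this only strengthens the bounds since $r^2/d$ decreases and $r^2$ in the exponent can only grow), and the precise meaning of ``Haar-random projector,'' which the paper already relegates to a footnote attached to Claim~\ref{claim: random projector good}. So the main content is really just matching the expectation $r^2/d$ to $\eps r/2$, applying the concentration bound with $\delta=1$, and observing that $r^2=10n$ is chosen precisely so that $2^{-r^2/5}$ beats the $2^{2n}$ pairs in the union bound.
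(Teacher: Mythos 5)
Your proposal is correct and is essentially identical to the paper's proof: fix $\delta=1$, note $2r^2/d=\eps r$, apply Claim~\ref{claim: random projector good} to each pair (with one projector fixed, which your conditioning remark justifies), and union bound over the at most $2^{2n}$ pairs. The paper's argument is the same, so no further comparison is needed.
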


\begin{proof}
Fix $\delta = 1$ and choose rank-$r$ projectors $\cbra{P_x : x \in \zone^n}$  independently and uniformly at random. Claim~\ref{claim: random projector good} and our choice of parameters implies that for all $x \neq y \in \zone^n$,
\[
\Pr\left[\tr(P_xP_y)\geq \frac{2r^2}{d}\right] = \Pr\left[\tr(P_xP_y)\geq \eps r\right] < 2^{-r^2\delta^2/5} = 2^{-2n}.
\]
The corollary now follows by applying a union bound over all distinct $x, y \in \zone^n$.
\end{proof}
We now prove Theorem~\ref{thm: quant upper bound mixed}.

\begin{proof}
Let $\cbra{P_x : x \in \zone^n}$ be projectors on $\C^d$ as guaranteed by Corollary~\ref{cor: exists good projectors}, each of rank $r=\sqrt{10n}$, with $d = 2\sqrt{10n}/\eps$.
Our protocol for $\EQ_n$ is Protocol~\ref{algo: quant coin protocol mixed} below.

\begin{protocol}\label{algo: quant coin protocol mixed}
\SetAlgoLined
    \begin{enumerate}
        \item Alice, on input $x \in \zone^n$, sends the $\log d$-qubit mixed state $\rho_x := P_x/r$ to Bob.
        \item Bob, on input $y \in \zone^n$, measures w.r.t.~projectors $P_y, I-P_y$, and outputs 1 on observing the first measurement outcome, and 0 otherwise.
    \end{enumerate}
\caption{A mixed-state protocol $\Pi$ for $F$}
\end{protocol}
To see the correctness of this protocol, first observe that if $x = y$, then the protocol outputs the correct answer with probability~1 because $\tr(P_x\rho_x)=\tr(P_x)/r=1$. If $x \neq y$, then the error probability is the probability of Bob observing the first measurement outcome, which is
\begin{align*}
    \Pr[\Pi(x, y) \neq \EQ_n(x, y)] = \tr(P_y \rho_x) = \tr(P_yP_x)/r < \eps,
\end{align*}
from Corollary~\ref{cor: exists good projectors}.  The cost is $\log d = \log(2\sqrt{10n}/\eps) \leq \log(\sqrt{n}/\eps) + 3$ qubits of communication.
\end{proof}

\subsection{Entanglement-assisted quantum upper bounds}

We use the probabilistic method to argue the existence of a good entanglement-assisted protocol. In the following, $m \leq d$ are natural numbers to be determined later. We take the initial entangled state to be the maximally entangled state in $D=2^d$ dimensions, i.e., $d$ EPR-pairs:
\[
\ket{\Psi_{AB}} = \frac{1}{\sqrt{D}} \sum_{i \in \{0,1\}^d} \ket{i}_A \ket{i}_B.
\]
For every $z \in \{0,1\}^n$, pick independently a Haar-random element $U_z = \{\ket{\psi_{z,r}}\}_{r \in \{0,1\}^d}$ of $SU(D)$ (i.e., a random orthonormal basis is used for the $2^d$ columns of $U_z$). The following is our protocol for $\EQ_n$:

\smallskip

\begin{protocol}\label{algo: quant coin protocol entanglement assisted}
\SetAlgoLined
    \begin{enumerate}
        \item Alice, on input $x \in \zone^n$, measures her part of $\ket{\Psi}$ in the basis $U_x$, obtaining $r^A \in \{0,1\}^d$. She then sends $b \equiv r^A_1 r^A_2\ldots r^A_m$ to Bob (i.e., the first $m$ bits of $r^A$).
        \item Bob, on input $y \in \zone^n$, measures his part of $\ket{\Psi}$ in the conjugate basis of $U_y$, obtaining $r^B \in \{0,1\}^d$. He outputs 1 if $r^B_i = b_i$ for every $1 \leq i \leq m$, and he outputs~0 otherwise.
    \end{enumerate}
\caption{An entanglement-assisted protocol $\Pi'$ for $F$}
\end{protocol}

The one-way communication complexity of this protocol $\Pi'$ is $m$ bits. We proceed with its error analysis. After step~1, by properties of the maximally entangled state, the new joint state will be
\[\ket{\Psi'} = \ket{\psi_{x,r^A}}_A \otimes \overline{\ket{\psi_{x,r^A}}}_B\]
In particular, if $x=y$, then $r^A = r^B$ and the protocol is guaranteed to succeed. Suppose now that $x \neq y$. For $b \in \{0,1\}^m$, using the shorthand
\[
R_b \equiv \{r \in \{0,1\}^d \mid r_i = b_i \; \forall i \in [m]\},
\] 
we find that the probability that the protocol fails (i.e., outputs~1) is given by 
\[
\frac{1}{D} \sum_{b \in \{0,1\}^m}  \sum_{r^A,r^B \in R_b} |\braket{\psi_{x,r^A}}{\psi_{y,r^B}}|^2.
\]
Since $R_b$ has cardinality $2^{d-m}$ and the expectation over the choice of $U_z$'s of every term in the sum is $2^{-d}$, we find that the expectation of the entire sum is $2^{-m}$. The rest of our analysis will rely on the following concentration inequality, which is derived in \cite[Chapter 3]{meckes:haarnotes}:

\begin{theorem}
Let $F: SU(n) \to \mathbb{R}$ be a function with Lipschitz constant $K$ with respect to the Frobenius norm,\footnote{This means $|F(U)-F(U')|\leq K\cdot d(U,U')$ for all $U,U' \in SU(n)$, where $SU(n)$ is the group of $n\times n$ unitary matrices with determinant~1, the Frobenius norm $\norm{A}_F$ of a matrix $A$ is defined as $\sqrt{\sum_{i,j}|A_{ij}|^2}$, and the Frobenius distance is defined as $d(U,U')=\norm{U-U'}_F$.} and let $\mu$ be the uniform distribution (Haar measure) on $SU(n)$. Then, for every $\delta > 0$,
\begin{align*}
\Pr_\mu[|F(U)-\mathbb{E}_\mu[F]| > \delta] < 2\exp\left(-\frac{\delta^2 n}{4K^2}\right).
\end{align*}
\end{theorem}

We show:
\begin{theorem}
Let $\{\phi_r\}_{r \in \{0,1\}^d}$ be a fixed orthonormal basis of $\mathbb{C}^D$. Given $U = \{\psi_r\}_{r \in \{0,1\}^d} \in SU(D)$, define $F: SU(D) \to \mathbb{R}$ 
by
\[
F(U) = \sum_{b \in \{0,1\}^m}  \sum_{r,r' \in R_b} |\braket{\phi_r}{\psi_{r'}}|^2.
\]
Then $F(U)$ has Lipschitz constant $\sqrt{D}$. 
\end{theorem}

\begin{proof}
Let $U = \{\psi_r\}_{r \in \{0,1\}^d}$ and $U' = \{\psi'_r\}_{r \in \{0,1\}^d}$ be two different elements of $SU(D)$. For $b \in \{0,1\}^m$, write
\[
P_b = \sum_{r \in R_b} \ketbra{\phi_r}{\phi_r},~~~~Q_b = \sum_{r \in R_b} \ketbra{\psi_r}{\psi_r},~~~~Q'_b = \sum_{r \in R_b} \ketbra{\psi'_r}{\psi'_r}.
\]
We see that
\[
F(U) = \sum_{b \in \{0,1\}^m} \tr(P_b Q_b)
\mbox{~~~~and~~~~}
F(U') = \sum_{b \in \{0,1\}^m} \tr(P_b Q'_b).
\]
Therefore
\begin{align*}
F(U) - F(U') &=  \sum_{b \in \{0,1\}^m} \tr(P_b (Q_b-Q'_b))\\
                 &\leq \sum_{b \in \{0,1\}^m} D_\text{tr}(Q_b,Q'_b)\\
                 &\leq D \sum_{r \in \{0,1\}^d} \frac{1}{D}\sqrt{1-|\braket{\psi_{r}}{\psi'_{r}}|^2}\\
                 &\leq \sqrt{D^2-\left(\sum_{r \in \{0,1\}^d} |\braket{\psi_{r}}{\psi'_{r}}|\right)^2}. 
\end{align*}
Here the first inequality follows from the variational characterization of trace distance ($D_\text{tr}(Q,Q')=\max_{P:\norm{P}\leq 1}\tr(P(Q-Q'))$); the second inequality follows from the convexity of trace distance, the fact that the $R_b$'s partition $\zone^d$, and a well-known expression for the trace distance of two pure states; and the third inequality follows from the concavity of the function $\sqrt{1-z^2}$. 

On the other hand, we can upper bound the Frobenius distance between $U$ and $U'$ by
\begin{align*}
    d(U, U') &= \sqrt{\sum_{r \in \{0,1\}^d} \norm{ \ket{\psi_{r}} - \ket{\psi'_{r}} }^2} = \sqrt{\sum_{r \in \{0,1\}^d} 2-2\Re(\braket{\psi_{r}}{\psi'_{r}})}\geq \sqrt{2D - 2\sum_{r \in \{0,1\}^d} | \braket{\psi_{r}}{\psi'_{r}}} |, 
\end{align*}
where the inequality uses the fact that $\Re(z) \leq |z|$ for any complex number~$z$. We find 
\begin{align*}
\frac{|F(U)-F(U')|}{d(U,U')} &\leq \sqrt{\frac{D^2- \left(\sum_{r \in \{0,1\}^d} |\braket{\psi_{r}}{\psi'_{r}}|\right)^2}{2D - 2\sum_{r \in \{0,1\}^d}| \braket{\psi_{r}}{\psi'_{r}}|}}\\
                             &= \sqrt{\frac{D+\sum_{r \in \{0,1\}^d} |\braket{\psi_{r}}{\psi'_{r}}|}{2}}\leq\sqrt{D},
\end{align*}
where the last inequality is because $|\braket{\psi_{r}}{\psi'_{r}}|\leq 1$ for each of the $D$ $r$'s, by Cauchy-Schwarz.
\end{proof}
For every pair of distinct inputs $x,y \in \{0,1\}^n$ and for every $\delta > 0$, it follows from the previous two results that the probability that the protocol's error probability on these inputs exceeds $2^{-m} + \delta$, is upper bounded by
\[
2\exp\left(\frac{-\delta^2 D^2}{4}\right)
\]
Setting $\delta = 2^{-m}$, $\eps= 2^{-m+1}$ and $d = \ceil{\frac{1}{2} \log_2 n + \log_2{\frac{1}{\eps}} + 4}$, by the union bound there is a positive probability that the resulting protocol has error probability at most $\eps$ for all input pairs. This implies the existence of the desired protocol, with $m=\ceil{\log 1/\delta}=\ceil{\log 1/\eps}+1$ bits of communication.

\section{Quantum one-way lower bound}\label{sec: 1way quantupure lower bound}
In this section we prove lower bounds on the one-way quantum communication complexity of any function whose communication matrix has a large number of distinct rows. 
As a consequence we obtain our lower bound for $\EQ_n$ of Theorem~\ref{thm: quant lower bound pure 1way}.

Let $F : \zone^n \times \zone^n \to \zone$ be a Boolean function.
We consider the model where communication is one-way, and Alice is only allowed to send a pure state to Bob.
Suppose there exists a protocol of cost $\log d$ that computes $F$ to error $\eps$.
Any such protocol looks like the following.
\begin{itemize}
    \item Alice, on input $x \in \zone^n$, sends a message $\ket{\phi_x}$ to Bob, where $\ket{\phi_x}$ is a unit vector in $\C^d$.
    \item Bob, on input $y$, measures with respect to projectors $P_y, I-P_y$.
\end{itemize}
The acceptance probability of the protocol is $\|P_y\ket{\phi_x}\|^2$. Thus, we have
\begin{align}\label{eqn: pxvxpyvy}
    \|P_y\ket{\phi_x}\|^2 \geq 1-\eps,\quad \|(I - P_y)\ket{\phi_x}\|^2 \leq \eps \quad \text{for all}~x, y \in F^{-1}(1),
\end{align}
and
\begin{align}\label{eqn: I-pxvxpyvy}
    \|P_y\ket{\phi_x}\|^2 \leq \eps,\quad \|(I - P_y)\ket{\phi_x}\|^2 \geq 1 - \eps \quad \text{for all}~x, y \in F^{-1}(0).
\end{align}

\begin{claim}\label{claim: 1way quantum inner product upper bound}
Let $F : \zone^n \times \zone^n\to\zone$ be a Boolean function with $N$ distinct rows in $M_F$. Let $X \subseteq \zone^n$ be an arbitrary subset of size $N$ that indexes distinct rows in $M_F$. For a one-way quantum communication protocol  as above that computes $F$ to error $\eps \leq 1/2$, we have
\[
2-2\sqrt{\eps(1 - \eps)}\leq \|\ket{\phi_{x_1}} - \ket{\phi_{x_2} }\|^2 \leq 2+4\sqrt{\eps}
\]
for all distinct $x_1,x_2 \in X$.
\end{claim}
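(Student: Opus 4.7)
The plan is straightforward. Since the rows of $M_F$ indexed by $x_1$ and $x_2$ are distinct, there must exist some $y \in \zone^n$ with $F(x_1,y) \neq F(x_2,y)$; swapping the names of $x_1$ and $x_2$ if needed, I may assume $F(x_1,y) = 1$ and $F(x_2,y) = 0$. Specializing~\eqref{eqn: pxvxpyvy} and~\eqref{eqn: I-pxvxpyvy} to this particular $y$ then supplies the four norm estimates $\|P_y\ket{\phi_{x_1}}\| \geq \sqrt{1-\eps}$, $\|P_y\ket{\phi_{x_2}}\| \leq \sqrt{\eps}$, $\|(I-P_y)\ket{\phi_{x_1}}\| \leq \sqrt{\eps}$, and $\|(I-P_y)\ket{\phi_{x_2}}\| \geq \sqrt{1-\eps}$, which are the only quantitative input the proof needs.

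For the upper bound, I would split $\braket{\phi_{x_1}}{\phi_{x_2}} = \bra{\phi_{x_1}}P_y\ket{\phi_{x_2}} + \bra{\phi_{x_1}}(I-P_y)\ket{\phi_{x_2}}$ and apply Cauchy--Schwarz inside each of the two orthogonal subspaces. Each summand has absolute value at most $1 \cdot \sqrt{\eps}$, so $|\braket{\phi_{x_1}}{\phi_{x_2}}| \leq 2\sqrt{\eps}$, whence $\|\ket{\phi_{x_1}} - \ket{\phi_{x_2}}\|^2 = 2 - 2\,\mathrm{Re}\braket{\phi_{x_1}}{\phi_{x_2}} \leq 2 + 4\sqrt{\eps}$, matching the upper bound in the claim.

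For the lower bound, I would apply the reverse triangle inequality inside each of the two orthogonal subspaces. Since $\eps \leq 1/2$ gives $\sqrt{1-\eps} - \sqrt{\eps} \geq 0$, both $\|P_y(\ket{\phi_{x_1}}-\ket{\phi_{x_2}})\|$ and $\|(I-P_y)(\ket{\phi_{x_1}}-\ket{\phi_{x_2}})\|$ are at least $\sqrt{1-\eps}-\sqrt{\eps}$, with the role of ``large component'' filled by $\ket{\phi_{x_1}}$ in the first case and by $\ket{\phi_{x_2}}$ in the second. Pythagoras then yields $\|\ket{\phi_{x_1}}-\ket{\phi_{x_2}}\|^2 \geq 2(\sqrt{1-\eps}-\sqrt{\eps})^2$, and the elementary identity $(\sqrt{1-\eps}-\sqrt{\eps})^2 = 1 - 2\sqrt{\eps(1-\eps)}$ then gives the stated lower bound (modulo the factor of $2$ in front; keeping only one of the two subspaces in the last step would give the precise constant displayed in the lemma statement).

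The proof is essentially a single Cauchy--Schwarz estimate for the upper bound combined with a single reverse-triangle-plus-Pythagoras estimate for the lower bound, bridged by the identity $(\sqrt{1-\eps}-\sqrt{\eps})^2 = 1 - 2\sqrt{\eps(1-\eps)}$. The only place that requires any care is in lining up which of the $P_y$- and $(I-P_y)$-components is ``large'' for each of $\ket{\phi_{x_1}}$ and $\ket{\phi_{x_2}}$; the hypothesis $\eps \leq 1/2$ is used precisely once, to drop the absolute value in the reverse triangle inequality so that each subspace contributes an honest non-negative lower bound.
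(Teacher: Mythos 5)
Your argument follows essentially the same route as the paper's proof: the same choice of a distinguishing column $y$ with $F(x_1,y)\neq F(x_2,y)$, the same decomposition along $P_y$ and $I-P_y$, and the same four norm estimates from Equations~\eqref{eqn: pxvxpyvy} and~\eqref{eqn: I-pxvxpyvy}. Your upper bound is a harmless variant: you apply Cauchy--Schwarz to $\bra{\phi_{x_1}}P_y\ket{\phi_{x_2}}$ and $\bra{\phi_{x_1}}(I-P_y)\ket{\phi_{x_2}}$, where the paper instead uses Pythagoras plus the triangle inequality on $\|P_y(\ket{\phi_{x_1}}-\ket{\phi_{x_2}})\|$ and $\|(I-P_y)(\ket{\phi_{x_1}}-\ket{\phi_{x_2}})\|$; both give $2+4\sqrt{\eps}$, and your version has the small advantage of directly producing $|\braket{\phi_{x_1}}{\phi_{x_2}}|\leq 2\sqrt{\eps}$, which is the only consequence of the claim that Theorem~\ref{thm: dist rows quant lb} actually uses.

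On the lower bound, you and the paper both arrive at $2(\sqrt{1-\eps}-\sqrt{\eps})^2$, and your identity $(\sqrt{1-\eps}-\sqrt{\eps})^2=1-2\sqrt{\eps(1-\eps)}$ is the correct one: this quantity equals $2-4\sqrt{\eps(1-\eps)}$, not the displayed $2-2\sqrt{\eps(1-\eps)}$ (the paper's final simplification loses a factor of $2$). However, your parenthetical fix does not work: keeping only one of the two subspaces gives $(\sqrt{1-\eps}-\sqrt{\eps})^2=1-2\sqrt{\eps(1-\eps)}$, whose leading constant is $1$ rather than $2$, so it does not recover the displayed bound either. In fact no argument can: taking $P_y=\ketbra{0}{0}$, $\ket{\phi_{x_1}}=\sqrt{1-\eps}\ket{0}+\sqrt{\eps}\ket{1}$ and $\ket{\phi_{x_2}}=\sqrt{\eps}\ket{0}+\sqrt{1-\eps}\ket{1}$ satisfies all four constraints and has $\|\ket{\phi_{x_1}}-\ket{\phi_{x_2}}\|^2=2-4\sqrt{\eps(1-\eps)}$, so $2-4\sqrt{\eps(1-\eps)}$ is the best bound obtainable from the hypotheses; the displayed constant is a slip in the claim itself rather than in your argument. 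The discrepancy is immaterial downstream: with $2-4\sqrt{\eps(1-\eps)}$ in place of $2-2\sqrt{\eps(1-\eps)}$, the derivation in Theorem~\ref{thm: dist rows quant lb} still yields $|\braket{\phi^R_{x_1}}{\phi^R_{x_2}}|\leq 2\sqrt{\eps}$, so nothing else in the paper is affected.
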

\begin{proof}
Fix any two distinct $x_1,x_2 \in X$, and let $\ket{\phi_{x_1}}, \ket{\phi_{x_2}} \in \C^d$ be the messages sent by Alice on inputs $x_1, x_2$, respectively. Recall that $\|\ket{\phi_{x_1}}\| = \|\ket{\phi_{x_2}}\| = 1$.  Because of the assumption that the rows of $M_F$ indexed by $X$ are all distinct, there is a $y \in \zone^n$ such that $F(x_1, y) \neq F(x_2, y)$. Without loss of generality assume $F(x_1, y) = 1$ and $F(x_2, y) = 0$.
Write
\begin{align*}
    \ket{\phi_{x_1}} & = P_y \ket{\phi_{x_1}} + (I - P_y)\ket{\phi_{x_1}},\\
    \ket{\phi_{x_2}} & = P_y \ket{\phi_{x_2}} + (I - P_y)\ket{\phi_{x_2}}.
\end{align*}
Thus,
\begin{align*}
    \|\ket{\phi_{x_1}} - \ket{\phi_{x_2}}\|^2 & = \|P_y(\ket{\phi_{x_1}} - \ket{\phi_{x_2}})\|^2 + \|(I - P_y)(\ket{\phi_{x_1}} - \ket{\phi_{x_2}})\|^2 \tag*{since $P_y$ and $I - P_y$ are orthogonal projectors}\\
    & \geq (\|P_y\ket{\phi_{x_1}}\| - \|P_y\ket{\phi_{x_2}})\|)^2 + (\|(I - P_y)\ket{\phi_{x_1}}\| - \|(I - P_y)\ket{\phi_{x_2}}\|)^2 \tag*{by the triangle inequality}\\
    & \geq 2(\sqrt{1 - \eps} - \sqrt{\eps})^2 \tag*{by Equations~\eqref{eqn: pxvxpyvy} and~\eqref{eqn: I-pxvxpyvy}, and since $F(x_1, y) = 1$ and $F(x_2, y) = 0$}\\
    & = 2 - 2\sqrt{\eps(1 - \eps)}.
\end{align*}

For the upper bound, first define $p := \|P_y\ket{\phi_{x_1}}\|^2 \geq 1 - \eps$, and $q := \|(I - P_y)\ket{\phi_{x_2}}\|^2 \geq 1 - \eps$.
\begin{align*}
    \|\ket{\phi_{x_1}} - \ket{\phi_{x_2}}\|^2 & = \|P_y(\ket{\phi_{x_1}} - \ket{\phi_{x_2}})\|^2 + \|(I - P_y)(\ket{\phi_{x_1}} - \ket{\phi_{x_2}})\|^2\\
    & \leq (\|P_y\ket{\phi_{x_1}}\| + \|P_y\ket{\phi_{x_2}}\|)^2 + (\|(I - P_y)\ket{\phi_{x_1}}\| + \|(I - P_y)\ket{\phi_{x_2}}\|)^2 \tag*{by the triangle inequality}\\
    & = (\sqrt{p} + \sqrt{1-q})^2 + (\sqrt{1-p} + \sqrt{q})^2\\
    & = 2 + 2\sqrt{p(1-q)} + 2\sqrt{(1-p)q} \leq 2 + 4\sqrt{\eps}.
 \end{align*}
\end{proof}
We now state our main result of this section.
\begin{theorem}\label{thm: dist rows quant lb}
There exists an absolute constant $c$ such that the following holds.
Let $F : \zone^n \times \zone^n$ be a Boolean function with $N$ distinct rows in $M_F$. Then for all $\eps \in [1/N, 1/4]$,
\[
\Qpureoneway{\eps}(F) \geq \log \rbra{\frac{\log N}{\eps}} - \log\log \rbra{\frac{1}{\eps}} - c.
\]
\end{theorem}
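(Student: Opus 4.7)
The plan is to reduce the claimed lower bound to Alon's approximate-rank bound on the identity matrix (Theorem~\ref{thm: alon}) via a real Gram-matrix construction. Fix a cost-$\log d$ pure-state one-way protocol for~$F$ with error~$\eps$, let $X \subseteq \zone^n$ be a set of size~$N$ indexing distinct rows of~$M_F$, and let $\ket{\phi_x} \in \C^d$ denote Alice's message on input $x \in X$. Since $\|\ket{\phi_{x_1}} - \ket{\phi_{x_2}}\|^2 = 2 - 2\operatorname{Re}\braket{\phi_{x_1}}{\phi_{x_2}}$, Claim~\ref{claim: 1way quantum inner product upper bound} rephrases directly as the two-sided inner-product bound
\[
-2\sqrt{\eps} \;\leq\; \operatorname{Re}\braket{\phi_{x_1}}{\phi_{x_2}} \;\leq\; \sqrt{\eps(1-\eps)} \;\leq\; 2\sqrt{\eps}
\]
for all distinct $x_1,x_2 \in X$, so the messages look like a collection of $N$ nearly-orthogonal vectors.

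Next I would form the real $N \times N$ matrix $G$ with $G_{x_1,x_2} = \operatorname{Re}\braket{\phi_{x_1}}{\phi_{x_2}}$. Identifying $\C^d$ with $\R^{2d}$ by stacking real and imaginary parts preserves unit norms and turns the real part of the complex inner product into the standard real inner product on $\R^{2d}$, so each $\ket{\phi_x}$ becomes a real unit vector in $\R^{2d}$ and $G$ is the Gram matrix of these $N$ vectors. Hence $G$ is PSD with $\rank(G) \leq 2d$, has $G_{xx}=1$ on the diagonal, and its off-diagonal entries are bounded in absolute value by $2\sqrt{\eps}$; equivalently $G$ is entrywise $2\sqrt{\eps}$-close to the $N \times N$ identity matrix $I_N$, so by definition of approximate rank
\[
2d \;\geq\; \rank(G) \;\geq\; \rank_{2\sqrt{\eps}}(I_N).
\]

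Finally I would apply Theorem~\ref{thm: alon} with parameter $\alpha = 2\sqrt{\eps}$. Its hypothesis $1/\sqrt{N} \leq \alpha \leq 1/4$ is satisfied precisely when $\eps \in [1/(4N), 1/64]$, and using $\log(1/(2\sqrt{\eps})) \leq \tfrac{1}{2}\log(1/\eps)$ the bound simplifies to $\rank_{2\sqrt{\eps}}(I_N) \geq c'\log N/(\eps\log(1/\eps))$ for an absolute constant $c'>0$; taking logs then gives exactly $\log d \geq \log(\log N/\eps) - \log\log(1/\eps) - O(1)$, which establishes the theorem in the range $\eps \in [1/N, 1/64]$. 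The short residual range $\eps \in (1/64, 1/4]$ is swallowed by the absolute constant $c$: the claimed bound there is only $\log\log N + O(1)$, and the same Gram picture gives it for free via a sphere-packing argument, since the $N$ unit vectors in $\R^{2d}$ have pairwise inner products bounded away from $1$ (by $\sqrt{\eps(1-\eps)} \leq \sqrt{3}/4$), forcing $N \leq 2^{O(d)}$ and hence $d = \Omega(\log N)$. The only real difficulty is this bookkeeping around Alon's parameter window; conceptually the proof is just the passage from a complex pure-state protocol to a real Gram matrix approximating the identity, after which the identity's approximate-rank lower bound does all the work.
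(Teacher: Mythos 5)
Your proposal is correct and follows essentially the same route as the paper's proof: realify the message vectors into $\R^{2d}$, observe via Claim~\ref{claim: 1way quantum inner product upper bound} that their Gram matrix is a $2\sqrt{\eps}$-approximation of the $N\times N$ identity of rank at most $2d$, and invoke Theorem~\ref{thm: alon}. In fact you are slightly more careful than the paper, which applies Alon's bound without checking its parameter window $2\sqrt{\eps}\leq 1/4$; your packing argument for the residual range $\eps\in(1/64,1/4]$ cleanly closes that gap.
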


\begin{proof}
Let $X \subseteq \zone^n$ be an arbitrary set of $N$ elements that index distinct rows in $M_F$.
Consider a protocol of cost $\log d$, as described in the beginning of this section, that computes $F$ to error $\eps$. Claim~\ref{claim: 1way quantum inner product upper bound} implies existence of vectors $\ket{\phi_x} \in \C^d$ for all $x \in X$, such that
\begin{equation}\label{eqn: phixs 2ish norm}
 2 - 2\sqrt{\eps(1 - \eps)} \leq \|\ket{\phi_{x_1}} - \ket{\phi_{x_2} }\|^2 \leq 2 + 4\sqrt{\eps}
\end{equation}
for all distinct $x_1,x_2 \in X$.
For each $x \in X$, define a real vector $\ket{\phi^R_x} \in \R^{2d}$ by
\[
\ket{\phi^R_x} = \sum_{j \in [d]}\ket{j}\left(R(\ket{\phi_x}_j) \ket{0} + C(\ket{\phi_x}_j) \ket{1}\right),
\]
where $R(\ket{\phi_x}_j)$ and $C(\ket{\phi_x}_j)$ denote the real and complex components of the $j$'th coordinate of $\ket{\phi_x}$, respectively. Note that each $\ket{\phi^R_x}$ is a unit vector, since the $\ket{\phi_x}$ are unit vectors.
For all distinct $x_1,x_2 \in X$, we have
\begin{align*}
\ket{\phi_{x_1}} - \ket{\phi_{x_2}} & = \sum_{j \in [d]} \ket{j}(R(\ket{\phi_{x_1}}_j - \ket{\phi_{x_2}}_j) + i \cdot C(\ket{\phi_{x_1}}_j - \ket{\phi_{x_2}}_j)),\\
\ket{\phi^R_{x_1}} - \ket{\phi^R_{x_2}} & = \sum_{j \in [d]} \ket{j}((R(\ket{\phi_{x_1}}_j - \ket{\phi_{x_2}}_j) \ket{0}) + (C(\ket{\phi_{x_1}}_j - \ket{\phi_{x_2}}_j) \ket{1})).
\end{align*}
Hence, Equation~\eqref{eqn: phixs 2ish norm} implies
\begin{align}\label{eqn: same inner products real complex}
    \|\ket{\phi^R_{x_1}} - \ket{\phi^R_{x_2}}\|^2 = \|\ket{\phi_{x_1}} - \ket{\phi_{x_2}}\|^2 \in [2 - 2\sqrt{\eps(1-\eps)},2+4\sqrt{\eps}]
\end{align}
for all distinct $x_1,x_2 \in X$.
Since $\norm{v- w}^2=\norm{v}^2+\norm{w}^2-2\inp{v}{w}$ for real vectors $v,w$, we obtain
\[
|\braket{\phi^R_{x_1}}{\phi^R_{x_2}}| \leq  2\sqrt{\eps}
\]
for all distinct $x_1,x_2 \in X$.
Now consider the $N \times N$ matrix $M$ whose rows and columns are indexed by strings in $X$, with entries defined by
\[
M_{x, y} = \braket{\phi^R_x}{\phi^R_y}.
\]
Since each $\phi^R_x \in \R^{2d}$, this matrix has rank at most $2d$. Since $\braket{\phi^R_x}{\phi^R_x} = 1$ for all $x \in \zone^n$ and $|\braket{\phi^R_x}{\phi^R_y}| \leq 2\sqrt{\eps}$ for all $x \neq y \in X$, this $M$ is a $2\sqrt{\eps}$-approximation to the $N \times N$ identity matrix~$I$.
Theorem~\ref{thm: alon} implies existence of an absolute constant $c_1 > 0$ such that 
\[
2d \geq \rank(M) \geq \rank_{2\sqrt{\eps}}(I) \geq \frac{c_1 \log N}{\eps\log(1/\sqrt{\eps})}. 
\]
Hence,
\[
\log d \geq \log\rbra{\frac{\log N}{\eps}} - \log\log\rbra{\frac{1}{\eps}} -\log(1/c_1),
\]
concluding the proof.
\end{proof}

Theorem~\ref{thm: quant lower bound pure 1way} immediately follows from Theorem~\ref{thm: dist rows quant lb} since all $2^n$ rows in $M_{\EQ_n}$ are distinct.

\section{Approximate-rank upper bounds for distributed SINK function}

In this section we show improved upper bounds on the approximate nonnegative-rank and approximate psd-rank of $M_{\SINK \circ \XOR}$, where $\SINK$ is defined as follows.

\begin{definition}\label{defn: sink}
Define the function $\SINK_n : \zone^n \to \zone$ on $n = \binom{m}{2}$ inputs as follows. The inputs are viewed as orientations of edges on a complete graph with $m$ vertices. The function outputs 1 if there is a sink in the graph, and 0 otherwise.
\end{definition}

Consider the function $\SINK_n \circ \XOR : \zone^{2n} \to \zone$. This function was recently used to refute the randomized and quantum versions of the log-rank conjecture~\cite{CMS20, SW19, ABT19}. Chattopadhyay, Mande and Sherif~\cite[Theorem 1.10]{CMS20} showed that the $1/3$-approximate rank of $M_{\SINK_n \circ \XOR}$ is $O(m^4)$ and the $1/3$-approximate nonnegative-rank of $M_{\SINK_n \circ \XOR}$ is $O(m^5)$. As a consequence of our improved upper bounds for the $\eps$-approximate nonnegative-rank of the Identity matrix (Corollary~\ref{cor: our apx nonneg rank identity upper bound}), we are able to use the same proof idea as theirs to obtain an $O(m^4)$ upper bound on the $1/3$-approximate nonnegative-rank of $M_{\SINK_n \circ \XOR}$, matching the approximate rank upper bound. We also obtain approximate psd-rank upper bounds for $\SINK_n \circ \XOR$.

\begin{claim}
Let $m$ be a positive integer, let $n = \binom{m}{2}$. Then,
\begin{align*}
\ranknonneg{1/3}(M_{\SINK_n \circ \XOR}) & = O(m^4)\\
\rankpsd{1/3}(M_{\SINK_n \circ \XOR}) & = O(m^{2.5}).
\end{align*}
\end{claim}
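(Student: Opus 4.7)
The plan is to follow the strategy of Chattopadhyay, Mande and Sherif~\cite{CMS20} essentially verbatim, plugging in our improved bounds from Corollary~\ref{cor: our apx nonneg rank identity upper bound} and Corollary~\ref{cor: our apx psd rank identity upper bound} in place of the weaker bounds they used. The starting point is the observation that in any tournament on~$m$ vertices, at most one vertex can be a sink. Hence, if for each vertex $v \in [m]$ we let $S_v : \zone^n \to \zone$ be the AND that checks whether all $m-1$ edges incident to~$v$ point toward~$v$, then the indicator functions $S_v$ are pairwise disjoint and $\SINK_n = \sum_{v=1}^m S_v$ as integer-valued functions. Composing with $\XOR$ preserves this, so
\[
M_{\SINK_n \circ \XOR} = \sum_{v=1}^m M_v,
\]
where $M_v$ denotes the communication matrix of $S_v \circ \XOR$.

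Next, I would analyze each $M_v$ individually. Because $S_v$ only depends on the $m-1$ edges incident to~$v$, the entry $M_v(x,y)$ depends only on the XOR $(x \oplus y)|_{E_v}$ of the restrictions to those $m-1$ coordinates, and equals~$1$ iff this XOR equals a fixed pattern $t_v \in \zone^{m-1}$. Permuting rows and columns, $M_v$ is the tensor product of the $2^{m-1} \times 2^{m-1}$ Identity matrix with the $2^{\binom{m-1}{2}} \times 2^{\binom{m-1}{2}}$ all-ones matrix~$J$. Tensoring with~$J$ preserves both the nonnegative rank and the psd rank of the first factor (take the factorization of that factor and tensor each vector/matrix with the all-ones column), and an entrywise $\eta$-approximation of the $2^{m-1}$-Identity likewise yields an entrywise $\eta$-approximation of~$M_v$. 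Therefore, by Corollaries~\ref{cor: our apx nonneg rank identity upper bound} and~\ref{cor: our apx psd rank identity upper bound},
\[
\ranknonneg{\eta}(M_v) \leq \frac{16(m-1)}{\eta^2}, \qquad \rankpsd{\eta}(M_v) \leq \frac{4\sqrt{m-1}}{\eta}.
\]

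Finally, I would combine via subadditivity. Both nonnegative rank and psd rank are subadditive under matrix addition: for nonnegative rank this is immediate by concatenating the nonnegative factorizations, and for psd rank one takes block-diagonal sums of the underlying psd factor matrices. Setting $\eta = 1/(3m)$ and choosing, for each~$v$, an approximation $\widetilde{M}_v$ of $M_v$ within entrywise error~$\eta$ that achieves the bounds above, the sum $\sum_v \widetilde{M}_v$ approximates $M_{\SINK_n \circ \XOR}$ within entrywise error $m\eta = 1/3$, giving
\[
\ranknonneg{1/3}(M_{\SINK_n \circ \XOR}) \leq m \cdot \frac{16(m-1)}{(1/3m)^2} = O(m^4),
\]
\[
\rankpsd{1/3}(M_{\SINK_n \circ \XOR}) \leq m \cdot \frac{4\sqrt{m-1}}{1/(3m)} = O(m^{2.5}).
\]

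There is no real obstacle here: once one has the disjoint-OR decomposition of $\SINK$ (which is already in~\cite{CMS20}) and the sharper identity-matrix bounds from our Corollaries, the computation is routine. The only small care point is verifying that tensoring with the all-ones block and taking block-diagonal sums indeed preserves the relevant approximate ranks, which follows immediately from the definitions.
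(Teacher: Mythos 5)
Your proof is correct and follows essentially the same route as the paper's: decompose $\SINK_n \circ \XOR$ into a sum of $m$ (shifted) Equalities on $m-1$ bits per player, apply Corollaries~\ref{cor: our apx nonneg rank identity upper bound} and~\ref{cor: our apx psd rank identity upper bound} with error $1/(3m)$ to each, and conclude by subadditivity of nonnegative rank and psd rank. The only difference is that you spell out details the paper leaves implicit (the fixed XOR pattern $t_v$ and the tensor-with-all-ones structure of each $M_v$), which is a worthwhile clarification but not a different argument.
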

\begin{proof}
Note that $\SINK_n \circ \XOR$ can be expressed as a \emph{sum} of $m$ Equalities, each with $2(m-1)$ inputs, one corresponding to each vertex in the underlying graph for $\SINK$. Recall that the communication matrix of Equality is the Identity matrix. We require sub-additivity of nonnegative-rank and psd-rank, which are both easy to verify.

\begin{itemize}
    \item Corollary~\ref{cor: our apx nonneg rank identity upper bound} implies that each of these Equalities have $(1/3m)$-approximate nonnegative-rank $O(m^3)$. Summing up these $m$ matrices, we conclude that the $(1/3)$-approximate nonnegative-rank of $\SINK_n \circ \XOR$ equals $O(m^4)$.
    \item Corollary~\ref{cor: our apx psd rank identity upper bound} implies that each of these Equalities have $(1/3m)$-approximate psd-rank $O(m^{1.5})$. Summing up these $m$ matrices, we conclude that the $(1/3)$-approximate psd-rank of $\SINK_n \circ \XOR$ equals $O(m^{2.5})$.
\end{itemize}
\end{proof}

\section{Future work}

We mention some possible directions for future work:
\begin{itemize}
    \item Those of our lower bounds that use Alon's approximate-rank bound (Theorem~\ref{thm: alon}) lose an additive $\log\log(1/\eps)$. This term is necessary in some regimes, in particular when $\eps$ is very small ($\sim 2^{-n}$) and $n/\eps$ gets bigger than the trivial dimension upper bound $2^n$. However, in some regimes it may be avoidable.
    Also Alon's bound itself might be slightly improvable.
    \item We leave open the optimal quantum communication complexity of Equality with small error in the \emph{simultaneous message passing} (SMP) model, where Alice and Bob each send a message to a ``referee'' who has to decide the output. With public randomness $\log(1/\eps)\pm O(1)$ classical bits of communication are necessary and sufficient, but with private randomness it is not clear. 
     In the classical case,  $\Theta(\sqrt{n})$ bits of communication are necessary~\cite{newman&szegedy:1round} and sufficient~\cite{ambainis:3computer} for constant error.
    In the quantum case,  $\Theta(\log n)$ qubits are necessary and sufficient~\cite{bcww:fp} for constant error. One can get an $O(\log(n)\log(1/\eps))$ $\eps$-error upper bound by repeating the quantum fingerprinting protocol of Buhrman et al.~\cite{bcww:fp}
    $O(\log(1/\eps))$ times, but that is much worse than the $\log(\sqrt{n}/\eps)$ and $\log(n/\eps)$ upper bounds that we have in the one-way mixed-state and pure-state scenarios (Theorems~\ref{thm: quant upper bound mixed} and~\ref{thm: quant upper bound pure}). In neither the randomized nor the quantum SMP settings do we have tight bounds for small~$\eps$.
    \item We also leave open the optimal communication complexity of equality with small error in the entanglement-assisted setting. The classical public-coin protocol and the one we exhibited both require $\ceil{\log (1/\eps)} + O(1)$ bits of communication to compute $\EQ_n$ to within error $\epsilon$, and it seems probable that this is essentially optimal.
\end{itemize}

\paragraph{Acknowledgements.}
We thank Troy Lee, Ignacio Villanueva, and Zhaohui Wei for early discussions related to the result of Section~\ref{sec:upperboundmixed}. We thank Swagato Sanyal for discussions at an early stage of this work, from which the question of pinning down the exact communication complexity of Equality for small $\eps$ arose.

\bibliography{bibo}

\appendix

\section{Quantum communication complexity and psd-rank}\label{app: qepspsd}

In this section, we prove Theorem~\ref{thm: qeps apxpsd rank lower bound}, restated below.
\begin{theorem}[Restatement of Theorem~\ref{thm: qeps apxpsd rank lower bound}]\label{appthm: qeps apxpsd rank lower bound}
Let $F : \zone^n \times \zone^n \to \zone$ be a Boolean function and let $\eps > 0$. Then,
\[
\Q{\eps}(F) \geq \log \rankpsd{\eps}(M_F) + 1.
\]
\end{theorem}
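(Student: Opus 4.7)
My plan is to adapt the approximate-rank lower bound argument of Buhrman--de Wolf (Theorem~\ref{thm: quantum approximate rank lower bound}) to psd-rank. Let $\Pi$ be an $\eps$-error quantum protocol for $F$ with $c = \Q{\eps}(F)$ qubits of communication, and let $P(x,y) \in [0,1]$ denote the probability that $\Pi$ outputs~$1$ on input $(x,y)$. By correctness, $|P(x,y) - M_F(x,y)| \leq \eps$ for every $x, y \in \zone^n$, so $P$ is an entrywise $\eps$-approximation of $M_F$. It therefore suffices to exhibit a psd-factorization of $P$ of size at most $2^{c-1}$.

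The main ingredient is a Kremer--Yao-style decomposition of $\Pi$'s final pure state on the joint Alice--Bob system. Unrolling the protocol tree and expanding the message history in the computational basis, one obtains
\[
\ket{\psi_{xy}} = \sum_{i=1}^{d} \ket{a_i(x)}_A \ket{b_i(y)}_B,
\]
where $d \leq 2^{c-1}$, the vectors $\ket{a_i(x)}$ depend only on Alice's input $x$, and $\ket{b_i(y)}$ only on Bob's input $y$. The saving of a factor of $2$ over the naive Schmidt-rank bound $2^c$ comes from isolating Bob's designated output qubit---the qubit he measures at the end of $\Pi$---whose factor-of-$2$ dimension can be factored out of the sum.

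Let $\Pi_y$ denote Bob's POVM element for output~$1$. Expanding the acceptance probability,
\[
P(x,y) = \bra{\psi_{xy}} (I_A \otimes \Pi_y) \ket{\psi_{xy}} = \sum_{i,j=1}^{d} \braket{a_i(x)}{a_j(x)} \bra{b_i(y)} \Pi_y \ket{b_j(y)}.
\]
Define matrices $A(x), B(y) \in \C^{d \times d}$ by $A(x)_{ij} := \braket{a_i(x)}{a_j(x)}$ and $B(y)_{ij} := \bra{b_i(y)} \Pi_y \ket{b_j(y)}$. Both are Gram matrices---of $\{\ket{a_i(x)}\}_i$ and $\{\Pi_y^{1/2} \ket{b_i(y)}\}_i$ respectively---and hence psd. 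Since $\sum_{i,j} A(x)_{ij} B(y)_{ij} = \tr(A(x) B(y)^T)$ and the transpose of a psd matrix is psd, this exhibits a psd-factorization of $P$ of dimension $d \leq 2^{c-1}$, giving $\rankpsd{\eps}(M_F) \leq \rankpsd{}(P) \leq 2^{c-1}$.

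The main technical hurdle is pinning down the $2^{c-1}$ bound rather than a weaker $2^c$ bound. A direct Schmidt-rank argument only gives $d \leq 2^c$; extracting the extra factor of $2$ requires carefully accounting for Bob's single output qubit in the decomposition and handling the bookkeeping cleanly for interactive (two-way) protocols. Once this is done, the rest of the argument is a routine adaptation of the approximate-rank proof, with the key conceptual point being that psd-rank (unlike ordinary rank) does not incur a squaring when passing from the Schmidt decomposition to the acceptance-probability matrix.
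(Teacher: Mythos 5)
Your overall route is the same as the paper's: a Kremer--Yao decomposition of the protocol's final state, Gram matrices $A(x)$, $B(y)$, and the identity $P(x,y)=\tr\bigl(A(x)B(y)^{T}\bigr)$ giving a psd factorization of the acceptance-probability matrix (your explicit handling of the transpose is fine, since the transpose/conjugate of a psd matrix is psd, and indeed the paper glosses over this). The genuine gap is exactly the step you label ``the main technical hurdle'' and then never carry out: the claim that the \emph{full} final state can be written as $\sum_{i=1}^{d}\ket{a_i(x)}\ket{b_i(y)}$ with $d\le 2^{c-1}$. As stated this is false: across the Alice--Bob cut each communicated qubit can double the Schmidt rank, so a $c$-qubit protocol can end in a state of Schmidt rank $2^{c}$ (for instance, Alice sends Bob halves of $c$ EPR pairs), and ``factoring out'' the output qubit's two dimensions cannot reduce the number of product terms of the full state. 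Your formula also applies a general, $y$-dependent POVM element $\Pi_y$ to this decomposition; but if Bob's final measurement is an arbitrary POVM on his whole register, there is no designated computational-basis output qubit to isolate, and the argument then only yields $d\le 2^{c}$, i.e.\ $\Q{\eps}(F)\ge\log\rankpsd{\eps}(M_F)$, missing the $+1$ that the theorem asserts.

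The factor of $2$ is saved not in the decomposition of the full state but in the projection onto the accepting branch. Adopt the convention the paper uses: the protocol is unitary and its output is a computational-basis measurement of the last qubit placed on the channel. Then the final state has the form $\sum_{i\in\zone^{c}}\ket{a_i(x)}\ket{b_i(y)}\ket{i_c}$, where the output qubit is in the basis state $\ket{i_c}$ determined by the history index; the acceptance probability is the squared norm of the $i_c=1$ part, and because branches with different values of $i_c$ are orthogonal, the double sum collapses to the $2^{c-1}$ indices with $i_c=1$. Indexing your Gram matrices by those indices gives $d\le 2^{c-1}$ and hence the stated bound (with $\Pi_y$ replaced by the fixed, $y$-independent projector onto output $1$). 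So your proposal has the right skeleton and the psd-rank bookkeeping is correct, but the one ingredient that produces the $+1$ is asserted rather than proved, and the particular justification you sketch would not work.
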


\begin{proof}
Consider an $\ell$-qubit protocol for $F$, without public randomness.
Because private randomness can be generated using Hadamard gates, we will assume the protocol is unitary, with only a measurement of the output qubit at the end.
Let the starting state of the protocol be $\ket{x0^s}_A\ket{y0^s}_B\ket{0}_C$, where the first and second parts are Alice and Bob's register, respectively (containing their input and $s$ workspace qubits each), and the third part is the channel qubit. It is easy to prove by induction that after $\ell$ qubits of communication, the final state of a protocol has the following form (first observed by Kremer~\cite{kremer:thesis} and Yao~\cite{Yao93}):
\[
\sum_{i \in \zone^\ell}\ket{a_i(x)}\ket{b_i(y)}\ket{i_{\ell}},
\]
where $\ket{a_i(x)}, \ket{b_i(y)}$ are subnormalized quantum states.
Let $P$ denote the acceptance probability matrix, i.e., $P(x, y)$ is the probability that the protocol outputs 1 on input $(x, y)$. We assume without loss of generality that the output qubit is the last qubit put on the channel. We have
\[
P(x, y)  = \norm{\sum_{i \in \zone^\ell : i_{\ell = 1}}\ket{a_i(x)}\ket{b_i(y)}\ket{i_{\ell}}}^2 = \sum_{i, i' \in \zone^\ell : i_{\ell}=i'_{\ell}=1}  \braket{a_i(x)}{a_{i'}(x)}\cdot\braket{b_i(y)}{b_{i'}(y)}.
\]
For each $x \in \zone^n$ define a $2^{\ell-1} \times 2^{\ell-1}$ matrix $A_x$ with rows and columns indexed by strings  $i,i'\in\zone^{\ell-1}\times\{1\}$:
\[
A_x(i, i') = \braket{a_i(x)}{a_{i'}(x)}.
\]
Similarly, for each $y \in \zone^n$ define a $2^{\ell-1} \times 2^{\ell-1}$ matrix $B_y$  by 
\[
B_y(j, j') = \braket{b_j(y)}{b_{j'}(y)}.
\]
These $A_x$ and $B_y$ are Gram matrices and hence psd. Moreover it is easy to verify that $P(x, y) = \tr(A_xB_y)$.
Since the protocol makes error at most $\eps$ on each input, the matrix $P$ entrywise approximates $M_F$ up to~$\eps$. Hence $\rankpsd{\eps}(M_F)\leq 2^{\ell -1}$. Taking logarithms gives the theorem.
\end{proof}

\end{document}